\newcommand{\SC}{\mathcal{SC}}
\newcommand{\dist}{\mathrm{dist}}
\newcommand{\C}{\mathcal{C}}
\newtheorem{theorem}{Theorem}[section]
\newtheorem*{theorem*}{Theorem}
\newtheorem{lemma}[theorem]{Lemma}
\newtheorem{observation}[theorem]{Observation}
\newtheorem{proposition}[theorem]{Proposition}
\newtheorem*{proposition*}{Proposition}
\newtheorem{corollary}[theorem]{Corollary}
\newtheorem*{corollary*}{Corollary}
\newtheorem*{fact*}{Fact}
\theoremstyle{definition}
\newtheorem*{definition*}{Definition}
\newtheorem{definition}[theorem]{Definition}
\newtheorem*{question*}{Question}
\newtheorem{example}[theorem]{Example}
\newtheorem*{notation*}{Notation}
\theoremstyle{remark}
\newcommand{\Ind}[1]
{#1\setbox0=\hbox{$#1x$}\kern\wd0\hbox to 0pt{\hss$#1\mid$\hss} \lower.9\ht0\hbox to 0pt{\hss$#1\smile$\hss}\kern\wd0}
\newcommand{\notind}[1]
{#1\setbox0=\hbox{$#1x$}\kern\wd0
\hbox to 0pt{\mathchardef\nn=12854\hss$#1\nn$\kern1.4\wd0\hss}
\hbox to 0pt{\hss$#1\mid$\hss}\lower.9\ht0 \hbox to 0pt{\hss$#1\smile$\hss}\kern\wd0}
\newcommand{\N}{\mathbb{N}}
\newcommand{\Gaif}{\mathsf{Gaif}}
\newcommand{\Dcal}{\ensuremath{\mathcal{D}}}
\newcommand{\Hcal}{\ensuremath{\mathcal{H}}}
\newcommand{\Ical}{\ensuremath{\mathcal{I}}}
\newcommand{\Tcal}{\ensuremath{\mathcal{T}}}
\newcommand{\FO}{\ensuremath{\mathsf{FO}}}
\newcommand{\MSO}{\ensuremath{\mathsf{MSO}}}
\title{Extension preservation on dense graph classes}
\date{}
\author{Ioannis Eleftheriadis\footnote{Funded by a George and Marie Vergottis Scholarship awarded by Cambridge Trust, an Onassis Foundation Scholarship, and a Robert Sansom studentship.} \\
    \small{University of Cambridge} \\
     \small{ie257@cam.ac.uk} 
}
\begin{document}

\maketitle

\begin{abstract}
    Preservation theorems provide a direct correspondence between the syntactic structure of first-order sentences and the closure properties of their respective classes of models. A line of work has explored preservation theorems relativised to combinatorially tame classes of sparse structures \cite{extensionspreservation, atseriaspreservation, dawar2010homomorphism, dawar2024preservation}.  In this article we initiate the study of preservation theorems for dense graph classes. In contrast to the sparse setting, we show that extension preservation fails on most natural dense classes of low complexity. Nonetheless, we isolate a technical condition which is sufficient for extension preservation to hold, providing a dense analogue to a result of \cite{extensionspreservation}.

\end{abstract}

\section{Introduction}

The early days of finite model theory were considerably guided by attempts aiming to relativise theorems and techniques of classical model theory to the finite realm. While many of these were trivially shown to admit no meaningful relativisation, others were extended in a way that broadened their applicability and rendered them extremely useful tools in the study of finite models. Preservation theorems were at the heart of this approach. 
Most notably, the {\L}o{\'s}-Tarski preservation theorem which asserts that a first-order formula is preserved by embeddings between all structures if and only if it is equivalent to an existential formula, was shown to fail in the finite from early on \cite{Tait_1959, gurevich1984toward}.  On the contrary, the homomorphism preservation theorem asserting that a formula is preserved by homomorphisms if and only if it is existential-positive, was open for several years until it was surprisingly shown to extend to finite structures \cite{rossman2008homomorphism}, leading to applications in constraint satisfaction problems and database theory. 

Still, considering all finite structures allows for combinatorial complexity, giving rise to wildness from a model-theoretic perspective, and intractability from a computational perspective. Indeed, problems which are hard in general become tractable when restring to classes of finite structures which are, broadly-speaking, tame \cite{dawar2007finite}. In the context of preservation theorems, restricting on a subclass weakens both the hypothesis and the conclusion, therefore leading to an entirely new question. A study of preservation properties for such restricted classes of finite structures was initiated in \cite{atseriaspreservation} and \cite{extensionspreservation} for homomorphism and extension preservation respectively. This investigation led to the introduction of different notions of wideness, which allow for arguments based on the \emph{locality} of first-order logic. However, as it was recently realised \cite{dawar2024preservation}, these arguments require slightly restrictive closure assumptions which are not always naturally present. In particular, it was shown that homomorphism preservation holds over any hereditary \emph{quasi-wide} class which is closed under \emph{amalgamation over bottlenecks}. 

Quasi-wideness is a Ramsey-theoretic condition which informally says that in any large enough structure in the class one can remove a bounded number of elements, called bottleneck points, so that there remains a large set of pairwise far-away elements. Here, the number of bottleneck points is allowed to dependent on the choice of distance. Hereditary quasi-wide classes were later identified with \emph{nowhere dense} classes \cite{firstorderproperties}. Over the years, a successful program was developed aiming to understand the combinatorial and model-theoretic features of nowhere dense classes, and exploit them for algorithmic purposes \cite{nevsetvril2012sparsity}. The culmination of this was the seminal result that first-order model checking is fixed-parameter tractable on nowhere dense classes \cite{deciding}.

In recent years, the focus has shifted towards extending this well understood theory to more general, possibly dense, well-behaved classes, which fall out of the classification provided by the sparsity program. In these efforts, the model-theoretic notions of \emph{monadic stability} and \emph{monadic dependence} have played central roles. Monadic stability, initially introduced by Baldwin and Shelah \cite{baldwin1985second} in the context of classification of complete first-order theories, prohibits arbitrarily large definable orders in monadic expansions. In the language of first-order transductions, a class is monadically stable whenever it does not transduce the class of finite linear orders. More generally, a class is monadically dependent if it does not transduce the class of all graphs. In the context of monotone classes of graphs, Adler and Adler \cite{adler} first observed that the above notions coincide with nowhere density, a result which was also extended to arbitrary relational structures \cite{monoNIP}. The generalisation of sparsity theory to dense classes eventually led to the result that first-order model checking is fixed-parameter tractable on all monadically stable graph classes \cite{dreier2023first}, which in particular include transductions of nowhere dense classes. It is conjectured that the above result extends to all monadically dependent classes, while a converse was recently established for hereditary graph classes (under standard complexity-theoretic assumptions) \cite{flipbreak}. 

The purpose of the present article is to initiate the investigation of preservation theorems on tame dense graph classes. Much like nowhere dense classes are equivalently characterised by quasi-wideness, monadically stable and monadically dependent graph classes also admit analogous wideness-type characterisations. In the case of monadic stability, the relevant condition is known as \emph{flip-flatness} \cite{indiscernibles}; this may be viewed as a direct analogue of quasi-wideness which replaces the vertex deletion operation by flips, i.e. edge-complementations between subsets of the vertex set. For monadically dependent classes the relevant condition, known as \emph{flip-breakability} \cite{flipbreak}, allows to find two large sets such that elements in one are far away from elements in the other, again after performing a bounded number of flips. However, unlike quasi-wideness which was introduced in the context of preservation and then shown to coincide with nowhere density, these conditions were introduced purely for the purpose of providing combinatorial characterisations of monadic stability and monadic dependence respectively. The immediate question thus becomes whether these conditions, or variants thereof, can be used to obtain preservation in restricted tame dense classes, in analogy to the use of wideness in \cite{extensionspreservation,atseriaspreservation,dawar2010homomorphism,dawar2024preservation}. 

The first observation is that the arguments for homomorphism preservation are not directly adaptable in this context due to the nature of flips. Indeed, while the vertex-deletion operation respects the existence of a homomorphism between two structures, the flip operation is not at all rigid with respect to homomorphisms precisely because the latter do not reflect relations, e.g. the graph $K_1 + K_1$ homomorphically maps to $K_2$, but $\overline{(K_1 + K_1)}=K_2$ does not map to $\overline{K_2}=K_1 + K_1$. This issue evidently disappears if one considers embeddings. As it was observed in \cite[Corollary 2.3]{dawar2024preservation}, the extension preservation property implies the homomorphism preservation property in hereditary classes of finite structures, so considering extension preservation is more general for our purposes. 

However, this generality comes at a cost. Indeed, the argument for extension preservation from \cite{extensionspreservation} requires that the number of vertex-deletions is independent of the choice of radius, a condition known as \emph{almost-wideness}. This is a more restrictive assumption which therefore applies to fewer sparse classes. It is not known whether extension preservation is obtainable for quasi-wide classes. At the same time, unlike \cite[Theorem 4.2]{dawar2024preservation} whose proof is essentially a direct application of Gaifman's locality theorem based on an argument of Ajtai and Gurevich \cite{ajtai}, the proof of extension preservation \cite[Theorem 4.3]{extensionspreservation} is admittedly much more cumbersome. One explanation for this is that the homomorphism preservation argument relies on the fact that the disjoint union operation endows the category of graphs and homomorphisms with \emph{coproducts}, i.e. for any graphs $A,B,C$ if there are homomorphisms $f:A \to C$ and $g:B \to C$ then there is a homomorphism $f+g:A + B \to C$ whose pre-compositions with the respective inclusion homomorphisms $\iota_A:A \to A+B$ and $\iota_B: B \to A+B$ are equal to $f$ and $g$ respectively. On the other hand, no construction satisfies the above property in the category of graphs with embeddings; in fact coproducts do not even exist in the category of graphs with strong homomorphisms (see \cite[Corollary 4.3.15]{KnauerKnauer+2019}). 

Our first contribution is negative, showing that extension preservation can fail on tame dense classes of low complexity. In particular, we show that extension preservation fails on the class of all graphs of (linear) cliquewidth at most $k$, for all $k \geq 4$. This answers negatively a question of \cite{dawar2024preservation}. This is contrary to the sparse picture, where it was shown that extension preservation holds in the class of graphs of treewidth at most $k$, for every $k \in \N$ \cite[Theorem 5.2]{extensionspreservation}. Interestingly, extension preservation holds for the class of all graphs of cliquewidth $2$ as this class coincides with the class of cographs which is known to be well-quasi-ordered \cite{damaschke1990induced}. Our construction is based on the encoding of linear orders via the neighbourhoods of certain vertices. Orders are also central to the original counterexample for the failure of extension preservation in the finite due to Tait \cite{Tait_1959}. There, the orders are crucially presented over a signature with two relation symbols and one constant, which does not allow for a direct translation to undirected graphs. Sadly, the fact that orders appear to provide counterexamples rules out the possibility of using an argument based on flip-breakability to establish preservation. 

The second contribution of the article is positive. In particular, we provide a dense analogue to \cite[Theorem 4.3]{extensionspreservation}. For this, we introduce \emph{strongly flip-flat} classes, i.e. those flip-flat classes such that the number of flips is independent of the choice of radius. Moreover, we formulate the dense analogue of the amalgamation construction, which we call the \emph{flip-sum}, whose existence in the class is necessary for the argument to be carried out. The main theorem (\Cref{thm:main} below) may thus be formulated as saying that extension preservation holds over any hereditary strongly flip-flat class which is closed under flip-sums over bottleneck partitions. 

\section{Preliminaries}\label{sec:prelims}

We assume familiarity with the standard notions of finite model theory and structural graph theory, 
and refer to \cite{ebbinghaus} and \cite{nevsetvril2012sparsity} 
for reference. In this article, graphs shall always refer to simple undirected graphs i.e. structures over the signature $\tau_E = \{E\}$ where $E$ is interpreted as a symmetric and anti-reflexive binary relation. For a graph $G$ we write $V(G)$ for its domain (or vertex set), and $E(G)$ for its edge set. In general, for a $\tau$-structure $A$ and a relation symbol $R \in \tau$ of arity $r \in \N$ we write $R^A\subseteq A^r$ for the interpretation of $R$ in $A$. We shall abuse notation and not distinguish between structures and their respective domains.

Given two structures $A,B$ in the same relational signature $\tau$, a homomorphism $f:A \to B$ is a map that preserves all relations, i.e. for all $R \in \tau$ and tuples $\bar a$ from $A$ we have $\bar a \in R^A \implies f(\bar a) \in R^B$. A \emph{strong} homomorphism is a homomorphism $f:A \to B$ that additionally reflects all relations, i.e. $f(\bar a) \in R^B \implies \bar a \in R^A$. An injective strong homomorphism is called an \emph{embedding} or \emph{extension}. 

A $\tau$-structure $B$ is said to be a \emph{weak substructure} 
of a $\tau$-structure $A$ if $B \subseteq A$ and the inclusion map $\iota: B \hookrightarrow A$ is a homomorphism. Likewise, $B$ is an \emph{induced substructure} of $A$ if the inclusion map is an embedding; we write $B \leq A$ for this. Given a structure $A$ and a subset $S \subseteq A$ we write $A[S]$ for the unique induced substructure of $A$ with domain $S$. An induced substructure $B$ of $A$ is said to be \emph{proper} if $B \subsetneq A$; we write $B \lneq A$ for this. We say that a class of structures in the same signature is \emph{hereditary} if it is closed under induced substructures. %By the \emph{hereditary closure} of a class $\C$ we mean the smallest (with respect to inclusion) class which is hereditary and contains $\C$ as a subclass. 
Moreover a class is called \emph{addable} if it is closed under taking disjoint unions, which we denote by $A+B$. 

By the \emph{Gaifman graph} of a structure $A$ we mean the undirected graph $\Gaif(A)$ with vertex set $A$  such that two elements are adjacent if, and only if, they appear together in some tuple of a relation of $A$. Given a structure $A$, $r \in \N$, and $a \in A$, we write $N^A_r(a)$ for the \emph{$r$-neighbourhood of $a$ in $A$}, that is, the set of elements of $A$ whose distance from $a$ in $\Gaif(A)$ is at most $r$. We shall often abuse notation and write $N^A_r(a)$ for the induced substructure $A[N^A_r(a)]$ of $A$. For a set $C \subseteq A$ we define $N^A_r(C):=\bigcup_{a \in C} N^A_r(a)$.  
A set $S \subseteq A$ is said to be \emph{$r$-independent} if $b \notin N^A_r(a)$ for any $a,b \in S$. 

For $r \in \N$, let $\mathrm{dist}(x,y)\leq r$ be the first-order formula expressing that the distance between $x$ and $y$ in the Gaifman graph is at most $r$, and $\mathrm{dist}(x,y)> r$ its negation. Clearly, the quantifier rank of $\mathrm{dist}(x,y)\leq r$ is at most $r$. A \emph{basic local sentence} is a sentence
\[ \exists x_1, \dots, x_n (\bigwedge_{i \neq j} \mathrm{dist}(x_i,x_j)> 2r \land \bigwedge_{i \in [n]} \psi^{N_r(x_i)}(x_i)),\]
where $\psi^{N_r(x_i)}(x_i)$ denotes the relativisation of $\psi$ to the $r$-neighbourhood of $x_i$, i.e. the formula obtained from $\psi$ by replacing every quantifier $\exists x \ \theta$ with $\exists x (\dist(x_i,x)\leq r \land \theta)$, and likewise every quantifier $\forall x \ \theta$ with $\forall x (\dist(x_i,x)\leq r \to \theta)$. We call $r$ the \emph{locality radius}, $n$ the \emph{width}, and $\psi$ the \emph{local condition} of $\phi$.  Recall the Gaifman locality theorem~\cite[Theorem~2.5.1]{ebbinghaus}.

\begin{theorem}[Gaifman Locality]
    Every first-order sentence of quantifier rank $q$ is equivalent to a Boolean combination of basic local sentences of locality radius $7^q$. 
\end{theorem}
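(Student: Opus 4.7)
The plan is to prove a strengthening by induction on the quantifier rank $q$: every first-order formula $\varphi(\bar x)$ of quantifier rank $q$ is equivalent to a Boolean combination of basic local sentences and formulas that are $7^q$-local around $\bar x$ (i.e.\ formulas whose satisfaction in any structure $A$ depends only on the induced substructure on $N^A_{7^q}(\bar x)$). Specialising to sentences ($\bar x$ empty) then yields the theorem. The base case is immediate, since atomic formulas are $0$-local. The Boolean combination step is also immediate because the class of Boolean combinations of basic local sentences and $r$-local formulas is by definition closed under $\neg$ and $\wedge$, and enlarging the locality radius from $r$ to $7r+1$ preserves locality.

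The substantive case is the existential quantifier. Suppose $\varphi(\bar x) \;=\; \exists y\,\psi(\bar x,y)$, and by induction $\psi(\bar x, y)$ is a Boolean combination of basic local sentences $\sigma_1,\dots,\sigma_m$ and formulas $\chi_1(\bar x, y),\dots, \chi_k(\bar x,y)$ that are $r$-local around $\bar x y$, where $r = 7^{q-1}$. Since the $\sigma_i$ have no free variables, one may distribute the existential quantifier over the Boolean structure in the $\chi_i$'s and reduce to the case in which $\psi$ is a conjunction of $r$-local formulas around $\bar x y$ (possibly negated). The plan is then to split the witness $y$ into two scenarios: either $y$ lies in the $2r$-neighbourhood of $\bar x$, or it lies at distance greater than $2r$ from $\bar x$.

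In the first subcase, \emph{close witness}, the existential quantifier can be relativised to $N_{2r}(\bar x)$, producing a formula that is $(3r{+}1)$-local around $\bar x$ once one enlarges neighbourhoods appropriately. In the second subcase, \emph{far witness}, the $r$-ball around $y$ is disjoint from the $r$-ball around $\bar x$, so the local conditions on $y$ decouple from those on $\bar x$: the formula $\chi_i(\bar x,y)$ becomes equivalent to a local condition $\chi_i^\star(y)$ around $y$ alone (obtained by relativising to $N_r(y)$ and replacing the variables $\bar x$ by parameters that, by the distance assumption, lie outside the relevant ball). Existence of such a $y$ is then captured by a basic local sentence of width $1$ and radius $r$, conjoined with the requirement that $y$ be far from $\bar x$, which is itself a local statement around $\bar x$ of radius $2r+1$. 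Collecting the radii, the resulting equivalent formula has locality radius bounded by $7r = 7^q$, matching the claimed bound.

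The main obstacle will be the careful accounting in the \emph{far witness} case: one must verify that, up to first-order equivalence, the contribution of witnesses lying outside $N^A_{2r}(\bar x)$ really depends only on the isomorphism type of their $r$-ball together with a count of how many pairwise far-apart such witnesses exist, so that it can be encoded by a disjunction of basic local sentences. This is where the symmetry between $\bar x$ and $y$ is broken by the parameters $\bar x$, and one needs to argue via a Hanf-type locality or Ehrenfeucht--Fra\"iss\'e game that, once the $r$-ball around $y$ is disjoint from $N_r(\bar x)$, the parameters no longer influence $\chi_i$'s truth. The remainder of the argument -- pushing the recursion through to obtain the $7^q$ radius -- is then a routine but tedious induction.
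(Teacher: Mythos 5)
The paper does not prove this statement at all: it is quoted as a known result, with a citation to Ebbinghaus--Flum (Theorem 2.5.1), so there is no internal proof to compare against. Your proposal follows the standard Gaifman induction on quantifier rank with the strengthened hypothesis (Boolean combinations of basic local sentences and formulas local around the free variables), which is indeed the route taken in the cited reference, and your base case, Boolean step, and close-witness case are fine in outline.

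However, the far-witness case as you state it has a genuine gap, and it is exactly the heart of the theorem. You claim that the existence of a witness $y$ with $\dist(\bar x,y)>2r$ satisfying an $r$-local condition ``is captured by a basic local sentence of width $1$ and radius $r$, conjoined with the requirement that $y$ be far from $\bar x$, which is itself a local statement around $\bar x$.'' This conjunction is not equivalent to the original formula: the \emph{same} element $y$ must simultaneously satisfy the local condition and be far from $\bar x$, and a basic local sentence cannot refer to $\bar x$, so you cannot simply split the two requirements. The correct argument, which you only flag as ``the main obstacle'' and resolve by appeal to an unproved Hanf-type claim, is the counting/scattered-set dichotomy: writing $s=|\bar x|$, either there exist $s+1$ elements, pairwise at distance greater than roughly $4r$, each satisfying the $r$-local condition --- a genuine basic local sentence of width $s+1$ (not width $1$) --- in which case at least one of them is automatically far from every $x_i$; or every such scattered family has size at most $s$, in which case all realisations of the local type lie in a bounded neighbourhood of any maximal scattered family, and the existence of a far witness becomes expressible as a formula local around $\bar x$ of suitably larger radius. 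Without carrying out this dichotomy (and its radius bookkeeping, which is what actually produces the factor $7$), the induction step is not established, so the proposal as written is incomplete at its decisive point.
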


A class $\C$ of structures is said to be \emph{quasi-wide} if for every $r \in \N$ there exist $k_r \in \N$ and $f_r: \N \to \N$ such that for all $m \in \N$ and all $A \in \C$ of size at least $f_r(m)$ there exists $S \subseteq A$ such that $A \setminus S$ contains an $r$-independent set of size $m$. Moreover, if $k_r:=k \in \N$ is independent of $r$, then $\C$ is said to be \emph{almost-wide}. Finally, we say that a class $\C$ is \emph{uniformly} quasi-wide (uniformly almost-wide respectively) if the hereditary closure of $\C$ is quasi-wide (almost-wide respectively). 

For a graph $G$ and a pair of disjoint vertex subsets $U$ and $V$, the subgraph {\em{semi-induced}} by $U$ and $V$ is the bipartite graph with sides $U$ and $V$ that contains all edges of $G$ with one endpoint in $U$ and second in $V$. By the \emph{half-graph of order $n$} we mean the bipartite graph with vertices $\{u_i,v_i: i \in [n]\}$ and edges $\{(u_i,v_j): i\leq j\}$.

Let $c \in \N$ and $\delta$ and $\phi$ be first-order formulas over the signature $\tau_E^c:= \tau_E \cup \{P_1,\dots,P_c\}$. We define the \emph{transduction $T_{\delta,\phi}$} as the operation that maps a graph $G$ to the class $T_{\delta,\phi}(G)$ containing the graphs $H$ such that there exists a $\tau^c_E$-expansion $G^+$ of $G$ satisfying $V(H)=\{v \in V(G): G^+ \models \delta(v)\}$ and 
\[E(H):= \left\{(u,v)\in V(H)^2 \colon u \neq v \land G^+ \models (\phi(u,v) \lor \phi(v,u))\right\}.\]
We write $T_{\delta,\phi}(\C):=\bigcup_{G \in \C} T_{\delta,\phi}(G)$ and say that \emph{$\Dcal$ is a transduction of a class $\C$}, or \emph{$\C$ transduces $\Dcal$}, if there exists a transduction $\Tcal_{\delta,\phi}$ such that  $\Dcal \subseteq I_{\delta,\phi}(\C)$. A graph class $\C$ is \emph{monadically dependent} if $\C$ does not transduce the class of all graphs. $\C$ is moreover \emph{monadically stable} if $\C$ does not transduce the class of all half-graphs.

We say that a formula $\phi$ is preserved by extensions 
over a class of structures $\C$ if for all $A,B \in \C$ such that there is a embedding 
from $A$ to $B$, $A \models \phi$ implies that $B \models \phi$. We say that a class of structures $\C$ has the \emph{extension preservation property} if for every formula $\phi$ preserved by extensions 
over $\C$ there is an existential 
formula $\psi$ such that $M \models \phi \iff M \models \psi$ for all $M \in \C$. We analogously define the \emph{homomorphism preservation property}, replacing ``embeddings'' with ``homomorphisms'' and ``existential'' with ``existential positive'' in the above. 

Given a formula $\phi$ and a class of structures $\C$, we say that $M \in \C$ is a \emph{minimal induced model} of $\phi$ in $\C$ if $M\models \phi$ and for any proper induced substructure $N$ of $M$ with $N \in \C$ we have $N \not\models \phi$. The relationship between minimal models and extensions preservation is highlighted by the following folklore lemma. We provide a proof for completeness.

\begin{lemma}\label{lem:minimalmodels}
    Let $\C$ be a hereditary class of finite structures. Then a sentence preserved by extensions in $\C$ is equivalent to an existential sentence over $\C$ if and only if it has finitely many minimal induced models in $\C$. 
\end{lemma}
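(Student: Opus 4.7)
The plan is to prove both directions by a standard descent/diagram argument.

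For the forward direction, suppose $\phi$ is equivalent over $\C$ to an existential sentence $\psi$. I would first put $\psi$ in disjunctive normal form: up to equivalence, $\psi$ can be written as $\bigvee_{i \in I} \exists \bar x_i \, \theta_i(\bar x_i)$ where each $\theta_i$ is a finite conjunction of literals in the variables $\bar x_i$, and $I$ is finite. For a structure $M \in \C$, $M \models \psi$ if and only if some tuple $\bar a$ in $M$ satisfies some $\theta_i$; equivalently, $M$ has an induced substructure whose atomic type contains $\theta_i$. Since $\C$ is hereditary, the minimal induced models of each disjunct $\exists \bar x_i \, \theta_i$ in $\C$ have cardinality at most $|\bar x_i|$ (as one may pass to the substructure induced on a witness tuple), so there are finitely many up to isomorphism. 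Minimal induced models of $\phi$ coincide with those of $\psi$ because the two sentences define the same subclass of $\C$, yielding the result.

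For the backward direction, suppose $\phi$ has finitely many minimal induced models $M_1, \ldots, M_n$ in $\C$ up to isomorphism. For each $i$, enumerate the elements of $M_i$ as a tuple $\bar a_i$ and let $\psi_i(\bar x_i)$ be the conjunction of all atomic and negated atomic formulas holding of $\bar a_i$ in $M_i$ (the quantifier-free diagram of $M_i$ in the variables $\bar x_i$, together with the inequalities $x_j \neq x_k$ for $j \neq k$). Set
\[
\psi \; := \; \bigvee_{i=1}^{n} \exists \bar x_i \, \psi_i(\bar x_i),
\]
which is existential. I would then verify both implications for an arbitrary $M \in \C$. If $M \models \psi$, then some $M_i$ embeds as an induced substructure of $M$; since $M_i \models \phi$ and $\phi$ is preserved by extensions over $\C$, we get $M \models \phi$. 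Conversely, if $M \models \phi$, I would run a finite descent: while $M$ is not minimal, choose a proper induced substructure $M' \lneq M$ with $M' \in \C$ (which exists by heredity) and $M' \models \phi$, and replace $M$ by $M'$. Since $M$ is finite, this terminates at some minimal induced model, which must be isomorphic to some $M_i$; as induced-substructure is transitive, $M_i \leq M$, so $M \models \psi$.

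The only subtle step is the descent in the backward direction, and specifically the check that minimal induced models are preserved under $\iso$ and that the quantifier-free diagram $\psi_i$ correctly characterises the existence of an induced copy of $M_i$: this requires including inequalities $x_j \neq x_k$ so that any satisfying tuple induces a structure isomorphic to $M_i$, rather than a quotient of it. With that care taken, the proof goes through routinely and no further technical obstacles arise.
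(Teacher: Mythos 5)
Your proof is correct and follows essentially the same route as the paper: the existential sentence you build from the quantifier-free diagrams of the minimal induced models (with inequalities, which the paper's ``primitive sentence inducing a copy of $M_i$'' also implicitly requires) is exactly the paper's $\psi$, and your forward direction bounds minimal models by passing to the substructure induced on a witness tuple, which is the same idea as the paper's non-surjective-embedding argument. The only differences are cosmetic (DNF with literal conjunctions instead of primitive disjuncts, and an explicit descent in place of ``contains a minimal induced model''), so no further comment is needed.
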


\begin{proof}
    Suppose that $\phi$ has finitely many minimal induced models in $\C$, say $M_1,\dots,M_n$. For each $i \in [n]$, let $\psi_i$ be the primitive sentence inducing a copy of $M_i$ and write $\psi:= \bigvee_{i \in [n]} \psi_i$; evidently $\psi$ is existential. We argue that $\phi$ is equivalent to $\psi$ over $\C$. Indeed, if $A \in \C$ models $\phi$ then $A$ contains a minimal induced model $B$ of $\phi$ as an induced substructure. By hereditariness $B \in \C$ and so $B$ is isomorphic to some $M_i$. Since there is clearly an embedding $B \to A$ it follows that $A \models \psi$. On the other hand if $A \models \psi$, then some $M_i$ embeds into $A$. Since $M_i \models \psi$ and $\psi$ is preserved by embeddings this implies that $A \models \phi$ as required.

    Conversely, assume that $\phi$ is equivalent to an existential sentence over $\C$. In particular, $\phi$ is equivalent to some disjunction $\bigvee_{i \in [n]} \psi_i$ where each $\psi_i$ is primitive. It follows that each $\psi_i$ is the formula inducing some structure $M_i$. Now, if $A$ is a minimal induced model of $\phi$ then in particular $A \models \psi_i$ for some $i \in [n]$, i.e. there is an embedding $h:M_i \to A$. If $h$ is not surjective, then $A[h[M_i]]$ is a proper induced substructure of $A$, which is in $\C$ by hereditariness, and models $\phi$; this contradicts the minimality of $A$. Hence, the size of every minimal induced model of $\phi$ in $\C$ is bounded by $\max_{i \in [n]} |M_i|$. It follows that $\phi$ can have only finitely many minimal induced models in $\C$. 
\end{proof}

\section{Failure of  preservation on graphs of cliquewidth 4}\label{sec:ext}

One consequence of \Cref{lem:minimalmodels} is that extension preservation holds over any class $\C$ that is \emph{well-quasi-ordered} by the induced substructure relation, i.e. classes for which there exists no infinite collection of members which pairwise do not embed into one another. In particular, this applies to the class of \emph{cographs} \cite{damaschke1990induced}, which are precisely the graphs of cliquewidth $2$ (see \cite{courcelle} for background on cliquewidth). Hence, one may reasonably inquire whether extension preservation is generally true for the class $\mathsf{Clique}_k$ of all graphs of cliquewidth at most $k$. This would in particular reflect an analogous phenomenon that is true in the sparse setting, that is, that extension preservation holds over the class $\mathsf{Tree}_k$ of all graphs of treewidth at most $k$ \cite[Theorem 5.2]{extensionspreservation}. 

Classes of bounded cliquewidth are not monadically stable, as even the class of cographs contains arbitrarily large semi-induced half-graphs, but they are monadically dependent. In fact, their structural properties imply tame behaviour going much beyond the context of first-order logic (see \cite{dabrowski} for a survey). Still, as it turns out, extension preservation fails even at the level of cliquewidth $4$. To show this, we produce a formula $\phi$ preserved by embeddings over the class of all finite undirected graphs, which admits minimal models of cliquewidth $4$. In particular, \Cref{lem:minimalmodels} implies that extension preservation fails on any class that includes these minimal models. Our idea is based on encoding two interweaving linear orders on the two parts of a semi-induced graph. Two vertices on the same part are comparable in this ordering whenever their neighbourhoods in the other part are set-wise comparable. This effectively forces a semi-induced half-graph. 

Our formula is in the form of an implication, proceeded by a primitive part which induces a gadget corresponding to the beginning and end of the two linear orders. The antecedent of the implication first makes sure that the above relation is a pre-ordering on each side of the semi-induced graph, while it imposes that the vertices of the gadget corresponding the minimal and maximal elements are indeed minimal and maximal in this pre-ordering. Moreover, it essentially makes sure that successors, i.e. vertices of the same part whose neighbourhoods over the other part differ by a single element, are adjacent on one part and non-adjacent on the other. The consequent then imposes that any vertex on the first side has an adjacent successor, while every vertex on the other side has a non-adjacent successor. Because each one of the two pre-orders precisely compares neighbourhoods over the other part, this forces the pre-orders to be anti-symmetric, and thus the two parts to have the same number of vertices. 

Finally, two additional vertices are also added on one side of the semi-induced bipartite graph, which are part of the gadget and serve no role in this ordering. These make sure that our intended minimal models form an anti-chain in the embedding relation, as they crucially result in the existence of a unique embedding of the gadget into the models (\Cref{lem:Iembeds} below).

\iffalse
\begin{definition}
    The cliquewidth of a graph $G$ is the minimal number $k \in \N$ of colours needed to construct $G$ by applying the following four operations:
    \begin{itemize}
        \item Create a new vertex $v$ with colour $i \in [k]$;
        \item Take the disjoint union $G+H$ of two $k$-coloured graphs $G,H$;
        \item For $i \neq j \in [k]$, add an edge between every vertex of colour $i$ and $j$;
        \item Recolour every vertex of colour $i$ with colour $j$. 
    \end{itemize}
\end{definition}

Much like treewidth measures how far a graph is from being a tree, cliquewidth is a measure of how far a graph is from being a clique. 
\fi
We now turn to formal definitions. Let $I(v_1,v_2,v_3,v_4,v_5,v_6,u_1,u_2,u_3,u_4,u_5,u_6,a,b)$ be the formula that induces the graph of \Cref{fig:I} below. In the following, we treat the free variables of $I$ as constants for simplicity.  The notation $\forall( x \in U)$ will denote the relativisation of the universal quantifier to the neighbours of $v_1$ that are not $v_2$, i.e. $\forall(x \in U) \ \psi(x)$ is shorthand for $\forall x (E(x,v_1)\land x\neq v_2  \to \psi(x))$. Likewise, the notation $\forall( x \in V)$ denotes the relativisation of the universal quantifier to the non-neighbours of $v_1$ that are not $a$ or $b$, i.e. $\forall(x \in V) \ \psi(x)$ is shorthand for $\forall x (\neg E(x,v_1) \land x\neq a \land x \neq b \to \psi(x))$. Existential quantifiers relativised to $U$ and $V$ are defined analogously. Consider the auxiliary formulas:
\[ x\leq_V y:= \forall (z \in U)[E(z,x) \to E(z,y)];\]
\[ x <_V y:= x\leq_V y \land \neg(y\leq_V x);\]
\[ \chi_1:=\forall (x \in V) \forall (y \in V) [x \leq_V y \lor y \leq_V x];\]
\[ \chi_2:=\forall (x \in U) [ E(x,v_6) \to x=u_6];\]
\[ \chi_3:=\forall (x \in V) \forall (y \in V) [x <_V y \land E(x,y) \to \exists ! (z \in U)(E(y,z) \land \neg E(x,z))].\]
In analogy, we define:
\[ x\leq_U y:= \forall (z \in V)[E(z,x) \to E(z,y)];\]
\[ x <_U y:= x\leq_U y \land \neg(y\leq_U x);\]
\[ \xi_1:=\forall (x \in U) \forall (y \in U) [x \leq_U y \lor y \leq_U x];\]
\[ \xi_2:=\forall (x \in V) [ E(x,u_1) \to x=v_1];\]
\[ \xi_{2^*}:=\forall (x \in V) \ E(x,u_6);\]
\[ \xi_3:=\forall (x \in U) \forall (y \in U) [x <_U y \land \neg E(x,y) \to \exists ! (z \in V)(E(y,z) \land \neg E(x,z))].\]
We then define:
\[ \phi_1:= \chi_1 \land \chi_2 \land \chi_3 ;\]
\[ \phi_2:=\forall (x \in V)[x \neq v_1 
\to \exists (y \in V)(E(x,y) \ \land \ x <_V y)];\]
\[ \psi_1:=\xi_1\land \xi_2 \land \xi_{2^*} \land \xi_3;\]
\[ \psi_2:=\forall (x \in U)[x \neq u_6 \to \exists (y \in U)(\neg E(x,y) \land x<_U y)].\]
Putting the above together we finally define:
\[ \phi:= \exists \bar v,\bar u,a,b ( I(\bar v,\bar u,a,b) \land 
[\phi_1(\bar v,\bar u,a,b) \land \psi_1(\bar v,\bar u,a,b) \to \phi_2(\bar v,\bar u,a,b) \land \psi_2(\bar v,\bar u,a,b) ])\]

\begin{figure}[h!]
    \centering
    \begin{tikzpicture}[scale=1.5, every node/.style={circle, draw, inner sep=1.5pt}]
    % Define the positions of the vertices
    \node (u1) at (1,1.5) {$u_1$};
    \node (u2) at (2,1.5) {$u_2$};
    \node (u3) at (3,1.5) {$u_3$};
    \node (u4) at (5,1.5) {$u_4$};
    \node (u5) at (6,1.5) {$u_5$};
    \node (u6) at (7,1.5) {$u_6$};
    \node (v1) at (1,0) {$v_1$};
    \node (v2) at (2,0) {$v_2$};
    \node (v3) at (3,0) {$v_3$}; 
    \node (v4) at (5,0) {$v_4$};
    \node (v5) at (6,0) {$v_5$};
    \node (v6) at (7,0) {$v_6$};
    \node (A) at (2,-1) {$a$};
    \node (B) at (6,-1) {$b$};

    \foreach \i in {1,...,6} {
        \foreach \j in {\i,...,6} {
            \draw (v\i) -- (u\j);
        }
    }
    \foreach \i in {3,...,6} {
    \draw (A) -- (u\i);
    \draw (u1) edge[out=45, in=135] (u\i);
    }
    \foreach \i in {4,...,6} {
    \draw (u2) edge[out=45, in=135] (u\i);
    \draw (u3) edge[out=45, in=135] (u\i);
    }
    \draw (u4) edge[out=45, in=135] (u6);
    \draw (v1) -- (v2);
    \draw (v2) -- (v3);
    \draw (v4) -- (v5);
    \draw (v5) -- (v6);
    \draw (A) edge[out=135, in=215] (u2);
    \draw (A) -- (v2);
    \draw (B) edge[out=135, in=215] (u5);
    \draw (B) -- (v5);
    \draw (B) -- (u6);
    
\end{tikzpicture}
    \caption{The gadget induced by $I(\bar v,\bar u,a,b)$.}
    \label{fig:I}
\end{figure}
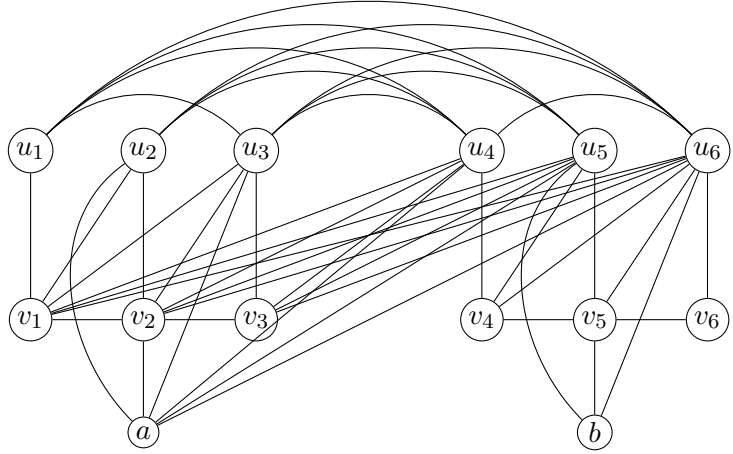

\begin{proposition}
    The formula $\phi$ is preserved by embeddings over the class of all finite graphs. 
\end{proposition}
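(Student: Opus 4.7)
The plan is a direct verification: given $G \models \phi$ with witnesses $\bar{z} = (\bar{v}, \bar{u}, a, b)$ and an embedding $f \colon G \to H$, I would show that the image $f(\bar{z})$ witnesses $\phi$ in $H$. The gadget part $I(f(\bar{z}))$ is immediate because $I$ is quantifier-free and embeddings preserve and reflect atomic and negated atomic formulas; the work is in verifying the implication at the image witnesses.

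First I split on whether the antecedent $\phi_1 \wedge \psi_1$ holds at $f(\bar{z})$ in $H$. If it fails, the implication is vacuously true. Otherwise, the next step is to descend the antecedent to $G$, establishing $G \models (\phi_1 \wedge \psi_1)(\bar{z})$. The clauses $\chi_2, \xi_2, \xi_{2^*}$ are plainly universal in the relativized sense and descend to the substructure $f(G)$, hence to $G$. For the totality clauses $\chi_1$ and $\xi_1$, $\leq_V^H$ is strictly stronger than $\leq_V^G$ on $f(V_G)$, since it quantifies over the larger set $U_H \supseteq f(U_G)$; hence totality in $H$ implies totality in $G$. For the successor clauses $\chi_3$ and $\xi_3$, the existence of a differentiator in $G$ is immediate from the hypothesis $x <_V y$, while its uniqueness in $G$ is pulled back from uniqueness in $H$ through the embedding, using that totality in $H$ promotes $x <_V^G y$ to $f(x) <_V^H f(y)$ so that $\chi_3$ in $H$ applies. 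This yields $G \models \phi_1 \wedge \psi_1$ at $\bar{z}$, and by hypothesis $G \models \phi_2 \wedge \psi_2$ at $\bar{z}$.

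Finally, I lift the consequent from $G$ to $H$. For $x = f(x') \in V_H$ with $x \neq v_1$, the adjacent strict successor $y'$ granted by $\phi_2$ in $G$ transfers to $y = f(y')$ in $H$, with adjacency preserved by the embedding and strictness promoted by totality in $H$. The delicate case is $x \in V_H \setminus f(V_G)$, where one must exhibit an adjacent successor without direct access to $G$'s data. Here the plan is to exploit the rigidity forced by $\phi_1 \wedge \psi_1$ in $H$ in combination with the gadget. For instance, $\xi_2$ pins $v_1$ as the unique $\leq_V^H$-maximum, since any $\leq_V^H$-equivalent candidate would have $u_1$ in its $U_H$-neighborhood and so coincide with $v_1$; analogously $\chi_2$ pins $u_6$ as the unique $\leq_U^H$-maximum. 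Combined with the uniqueness clauses of $\chi_3$ and $\xi_3$, this forces new vertices of $V_H$ (and symmetrically $U_H$) to have adjacency patterns that fit into the chain of $f(V_G)$-successors, so the desired $y$ can be located among $f(V_G)$. The hard part will be this lifting step — handling vertices outside $f(G)$ via only the gadget's rigidity clauses and the antecedent's constraints — and working it out rigorously requires careful case analysis.
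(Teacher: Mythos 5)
Your first half — taking the same witness tuple in $H$, discharging the case where the antecedent fails, and descending $\phi_1\land\psi_1$ to $G$ to extract $\phi_2\land\psi_2$ there — is exactly the paper's argument and is fine (including the promotion of strictness via comparability in $H$). The gap is in your final step. The paper never lifts the consequent to $H$ vertex-by-vertex; instead it proves that, once $H$ satisfies $\phi_1\land\psi_1$ at the tuple, there are \emph{no} vertices of $H$ outside $G$ at all, so $G=H$ and $H\models\phi$ trivially. Your proposed local route for a new vertex $x\in V_H\setminus f(V_G)$ cannot be completed as sketched: the antecedent gives only comparability of $U$-neighbourhoods ($\chi_1$) and a constraint on pairs that are \emph{already} adjacent ($\chi_3$), so nothing pins down an edge from $x$ to a vertex with strictly larger $U$-neighbourhood; "locating the desired $y$ among $f(V_G)$" is not forced by the gadget's rigidity, and the statement you would need is really that such an $x$ cannot exist.

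The missing idea is a global counting argument. From $\phi_2$ in $G$ one builds a chain $v_6=\alpha_1,\dots,\alpha_n=v_1$ of distinct elements of $V'=V\cap V(G)$ with adjacent consecutive members and strictly increasing $U'$-neighbourhoods; in $H$, $\chi_3$ forces each step to enlarge the $U$-neighbourhood by exactly one element, while $\chi_2$ and the definition of $U$ give $U_{\alpha_1}=\{u_6\}$ and $U_{\alpha_n}=U$, so $|U|=n\le|V'|$. Symmetrically, $\psi_2$ in $G$ together with $\xi_2,\xi_{2^*},\xi_3$ in $H$ yields $|V|\le|U'|$. Combining $|U|\le|V'|\le|V|\le|U'|\le|U|$ forces $U=U'$ and $V=V'$, i.e.\ $G=H$. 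Without this cardinality argument (or some substitute ruling out extra vertices), your "delicate case" is not a matter of careful case analysis that can be deferred — it is the heart of the proof, and the per-vertex lifting strategy you outline does not close it.
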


\begin{proof}
    Let $G,H$ be two graphs such that $G$ embeds into $H$, and $G \models \phi$. Without loss of generality we assume that $V(G) \subseteq V(H)$ and that the identity map is an embedding. We shall argue that $H \models \phi$. 

    Since $G \models \phi$, we may fix (tuples of) vertices $\bar v,\bar u,a,b \in V(G)$ such that $G \models I(\bar v,\bar u,a,b)$. Evidently, $H$ also models $I(\bar v,\bar u,a,b)$. If $H \not\models (\phi_1(\bar v,\bar u,a,b)\land \psi_1(\bar v,\bar u,a,b))$ then $H \models \phi$; we may therefore assume that $H \models (\phi_1(\bar v,\bar u,a,b)\land \psi_1(\bar v,\bar u,a,b))$. Let $U:=N_H(v_1)\setminus \{v_2\} \subseteq V(H)$ be the neighbours of $v_1$ in $H$ that are not $v_2$, and $V:=V(H) \setminus (U \cup \{a,b\})\subseteq V(H)$ be the non-neighbours of $v_1$ that are not $a$ or $b$ (in particular $v_1 \in V$). We call the vertices $x \in V$ \emph{$V$-elements}. For each $V$-element $x$ we write $U_x:=\{y \in U: H \models E(x,y)\}$ for the \emph{$U$-neighbourhood of $x$}. We similarly define \emph{$U$-elements} and \emph{$V$-neighbourhoods}. From each of the conjuncts $\chi_i$ of $\phi_1$ we deduce that in $H$:
    \begin{itemize}
        \item[$\chi_1$:] $V$-elements have pairwise comparable $U$-neighbourhoods;
        \item[$\chi_2$:] the only member of $U_{v_6}$ is $u_6$;
        \item[$\chi_3$:] if two adjacent $V$-elements $x,y$ satisfy $U_x \subsetneq U_y$ then $|U_y|=|U_x|+1$.
    \end{itemize}

    We shall argue that items $(1)$-$(3)$ are still true within $G$, replacing $V$ with $V':=V \cap V(G)$ and $U$ with $U':=U \cap V(G)$, and so $G \models \phi_1(\bar v,\bar u,a,b)$. We write $U'_x:=U_x \cap B$ for the relativised $U'$-neighbourhoods. Clearly, items $(1)$ and $(2)$ are still true in $G$. For item $(3)$, suppose that $x,y \in V'$ are two adjacent $V'$-elements, such that $V'_x \subsetneq V'_y$. Since $V$-elements in $H$ have pairwise comparable $U$-neighbourhoods, we deduce that $V_x \subsetneq V_y$ and therefore that $|V_y|=|V_x|+1$ as $H$ satisfies $\chi_3$. In particular, it follows that $|V'_y|=|V'_x|+1$ as required, and so $G$ models $\chi_3(\bar v,\bar u,a,b)$ and consequently $\phi_1(\bar v,\bar u,a,b)$.

    Similarly, from each of the conjuncts $\xi_i$ of $\psi_1$ we deduce that in $H$:
      \begin{itemize}
        \item[$\xi_1$:] $U$-elements have pairwise comparable $V$-neighbourhoods;
        \item[$\xi_2$:] the only element of $V_{u_1}$ is $v_1$, while $V_{u_6}=V$;
        \item[$\xi_{2^*}$:] $V_{u_6}$ is equal to $V$;
        \item[$\xi_3$:] if two non-adjacent $U$-elements satisfy $V_x \subsetneq V_y$ then $|V_y|=|V_x|+1$.
    \end{itemize}

    Arguing as before, we obtain that $G \models \psi_1(\bar v,\bar u,a,b)$. Since $G \models \phi$ and $G \models (\phi_1(\bar v,\bar u,a,b)\land \psi_1(\bar v,\bar u,a,b))$ we deduce that $G \models (\phi_2(\bar v,\bar u,a,b) \land \psi_2(\bar v,\bar u,a,b))$, i.e. the following are true in $G$:
    \begin{itemize}
        \item[$\phi_2$:] every $V'$-element that is not $v_1$ is adjacent to a $V'$-element of strictly greater $U'$-neighbourhood;
        \item[$\psi_2$:] every $U'$-element that is not $u_6$ is non-adjacent to a $U'$-element of strictly greater $V'$-neighbourhood. 
    \end{itemize}

    We proceed to show that the above implies that $V=V'$ and $U'=U$, and hence $G=H$. In particular, this implies that $H \models \phi$ as claimed. 
    
    Since $G$ is finite and satisfies $\phi_2$ we obtain some $n \in \N$ and a sequence of distinct elements $\alpha_1:=v_6,\alpha_2,\dots,\alpha_n:=v_1$ of $V'$ such that $U'_{\alpha_i}\subsetneq U'_{\alpha_{i+1}}$ and $G \models E(\alpha_i,\alpha_{i+1})$ for all $i \in [n-1]$. In particular, $U_{\alpha_i}\subsetneq U_{\alpha_{i+1}}$ and $H \models E(\alpha_i,\alpha_{i+1})$ for all $i \in [n]$. As $H$ satisfies $\chi_3$ we obtain that $|U_{\alpha_{i+1}}|=|U_{\alpha_i}|+1$ for all $i$. 
    Moreover, since $H$ satisfies $\chi_2$ and every element of $U$ is adjacent to $v_1$, we obtain that $U_{\alpha_1}=\{u_6\}$ and  $U_{\alpha_n}=U$. In particular, we deduce that $n=|U|\leq |V'|$. 
    Symmetrically, we obtain some $k \in \N$ and a sequence of elements $\beta_1:=u_1,\beta_2,\dots,\beta_k:=u_6$ of $U'$ such that $V'_{\beta_i}\subsetneq V'_{\beta_{i+1}}$ and $G \models \neg E(\beta_{i},\beta_{i+1})$ for all $i \in [k-1]$. Hence, 
    $V_{\beta_{i}}\subsetneq V_{\beta_{i+1}}$ and $H \models \neg E(\beta_{i},\beta_{i+1})$ for all $i \in [k-1]$. Once again, since $H$ satisfies $\xi_2$, $\xi_{2^*}$, and $\xi_3$ we obtain that $V_{\beta_1}=\{v_1\}$, $V_{\beta_n}=V$, and $|V_{\beta_{i+1}}|= |V_{\beta_{i}}|+1$. It thus follows that $k=|V|\leq |U'|$. Putting the above together we have that 
    \[ |U|\leq |V'| \leq |V| \leq |U'| \leq |U|. \]

    Consequently, $n=k$ while $V=V'=\{\alpha_1,\dots,\alpha_n\}$ and $U=U'=\{\beta_1,\dots,\beta_n\}$ as needed. 
\end{proof}

We now define the intended minimal induced models of our formula $\phi$.

\begin{definition}
    For $n \geq 7$ we define the graph $\Hcal_n$ with vertex and edge set 
    \[ V(\Hcal_n):= \{v_1,\dots,v_n\} \cup \{u_1,\dots,u_n\} \cup \{a\} \cup \{b\}; \]
    \[ E(\Hcal_n):= \{(v_i,u_j):i\leq j\} \cup \{(v_i,v_j): j=i+1\} \cup \{(u_i,u_j): j \neq i+1\}  \]
    \[ \cup \ \{(a,u_i) : i\geq 2\} \cup \{(b,u_i): i \geq n-1\} \cup \{(a,v_2),(b,v_{n-1})\},  \]
    respectively. We also write $\Ical_n$ for the subgraph of $\Hcal_n$ induced on the set 
    \[ V(\Ical_n):=\{v_1,v_2,v_3,v_{n-2},v_{n-1},v_n,u_1,u_2,u_3,u_{n-2},u_{n-1},u_n,a,b\} \subseteq V(\Hcal_n).\] 
\end{definition}

\begin{figure}[h!]
    \centering
    \begin{tikzpicture}[scale=1.5, every node/.style={circle, draw, inner sep=1.5pt}]
    % Define the positions of the vertices
    \foreach \i in {1,2,...,7} {
        \node (u\i) at (\i, 1.5) {};
        \node (v\i) at (\i, 0) {};
    }
    \node (A) at (2,-1) {};
    \node (B) at (6,-1) {};

    % Draw the edges
    \foreach \i in {1,...,7} {
    \edef\next{\number\numexpr \ifnum \i<7 \i+1 \else \i\fi}

    \draw (v\i) -- (v\next);
    %\draw (u\i) -- (A);
        \foreach \j in {1,...,\i} {
            \draw (u\i) -- (v\j);
        }
    }

    \foreach \i in {1,...,5} {
        \edef\nextt{\number\numexpr \ifnum \i<6 \i+2\else 5\fi}
    \foreach \k in {\nextt,...,7} {
        \draw (u\i) edge[out=45, in=135] (u\k);
    }
    }
    \foreach \i in {3,...,7} \draw (A) -- (u\i);
    \draw (A) edge[out=135, in=215] (u2);
    \draw (A) -- (v2);
    \draw (B) edge[out=135, in=215] (u6);
    \draw (B) -- (v6);
    \draw (B) -- (u7);tik
\end{tikzpicture}
    \caption{The graph $\Hcal_7$.}
    %\label{fig:enter-label}
\end{figure}

We aim to establish that the graphs $\Hcal_n$ are all minimal induced models of $\phi$. Towards this, we first argue that the only embedding of $\Ical_n$ in $\Hcal_n$ is the inclusion map. While this lemma is not conceptually difficult, it requires analysing and ruling out different cases corresponding to potential images of the gadget. This is achieved in two steps. First, we consider the subgraph $\Ical'$ of $\Hcal_n$ induced on $\{v_1,v_2,v_3,u_1,u_2,u_3,a\}$. Evidently, the map $\phi_n: \Ical' \to \Hcal_n$ sending 
\[ (v_1,v_2,v_3,u_1,u_2,u_3,a) \mapsto (v_{n-2},v_{n-1},v_n,u_{n-2},u_{n-1},u_n,b)\]
is an embedding. We argue that this is the only non-trivial embedding of $\Ical'$ in $\Hcal_n$.

\begin{lemma}\label{lem:embeds2}
    Let $n \geq 7$ and $f:\Ical' \to \Hcal_n$ be an embedding. Then $f$ is either the inclusion map or equal to $\phi_n$. 
\end{lemma}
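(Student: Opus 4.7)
My plan is to first pin down the image $f(a)$ of the distinguished vertex $a$, and then use the rigid structure of the half-graph between the $v_i$'s and $u_j$'s to propagate constraints and determine the image of every remaining vertex. The two possible values of $f(a)$, namely $a$ and $b$, correspond exactly to the two embeddings claimed in the lemma.

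For the first stage, I use that $a$ has a very distinctive local signature in $\Ical'$: its neighborhood $\{v_2,u_2,u_3\}$ induces a $P_3$ centered at $v_2$ (since $v_2\sim u_2$, $v_2\sim u_3$ but $u_2\not\sim u_3$), while its non-neighborhood $\{v_1,v_3,u_1\}$ contains the single edge $(v_1,u_1)$ together with the isolated vertex $v_3$. Thus any candidate $f(a)\in V(\Hcal_n)$ must simultaneously admit an induced $P_3$ in its neighborhood and an ``isolated edge + vertex'' configuration in its non-neighborhood, such that the adjacencies between these six vertices match those in $\Ical'$. A direct case analysis through the four vertex types of $\Hcal_n$ (namely $v_i$, $u_j$, $a$, $b$) rules out all $v_i$'s and all $u_j$'s: interior $v_i$'s fail because the half-graph makes $v_{i-1}$ adjacent to all candidate images of $u_2,u_3$, leaving no room for an independent pair; interior $u_j$'s fail because they are adjacent to almost every $v_k$ and every non-consecutive $u_k$, so their non-neighborhood cannot exhibit the required edge; and $v_1,v_n,u_1,u_n$ are excluded by degree or by similar incompatibility with the edge $(f(v_1),f(u_1))$.

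For the second stage, suppose $f(a)=a$. The non-neighborhood of $a$ in $\Hcal_n$ equals $\{v_i:i\neq 2\}\cup\{u_1,b\}$, so $f(v_1),f(v_3),f(u_1)$ all lie there. Among the edges of this non-neighborhood, the only one that is \emph{isolated} (not adjacent to any further non-neighbor of $a$ via another edge) is $(v_1,u_1)$, which forces $\{f(v_1),f(u_1)\}=\{v_1,u_1\}$ and $f(v_3)$ to be a non-neighbor of both; the half-graph adjacencies of $v_1$ versus $u_1$ (with $v_1$ adjacent to every $u_j$, while $u_1$ has only $u_3,u_4,\dots,u_n$ as $u$-neighbors) then force $f(u_1)=u_1$ and $f(v_1)=v_1$. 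Once this is set, the common-neighbor analysis in $\Hcal_n$ restricts $f(v_2),f(u_2)\in\{v_2,u_2\}$; the constraint $f(v_3)\sim f(v_2)$ combined with $f(v_3)\not\sim f(v_1)=v_1$ rules out the swap, yielding $f(v_2)=v_2$ and $f(u_2)=u_2$. Finally $f(u_3)$ is determined as the unique element in $N(a)\cap N(u_1)\cap N(v_1)\cap N(v_2)$ non-adjacent to $u_2$, giving $f(u_3)=u_3$, and $f(v_3)=v_3$ follows as the unique remaining candidate. The case $f(a)=b$ is handled by the symmetric cascade near $u_n,v_n$ and yields $f=\phi_n$.

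The main obstacle is the first stage, since many interior vertices of $\Hcal_n$ have rich enough neighborhoods to superficially accommodate a $P_3$ of the desired shape. The key to keeping the case analysis tractable is that requiring $f$ to extend to a full embedding of $\Ical'$ forces \emph{simultaneous} compatibility of the edge $(v_1,u_1)$-non-neighborhood pattern and the $P_3$-neighborhood pattern, and this joint constraint is rigid enough that only $a$ and $b$ survive.
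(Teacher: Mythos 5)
Your overall strategy (anchor the analysis at $f(a)$ and then propagate, rather than at $f(v_2)$ as the paper does) could in principle be made to work, but the justifications you give for the two decisive steps are not correct, and the case analysis that would actually close them is never carried out. In stage one, the stated reason for excluding interior $u_j$ --- ``their non-neighborhood cannot exhibit the required edge'' --- is false: the non-neighbourhood of $u_j$ in $\Hcal_n$ is $\{v_{j+1},\dots,v_n\}\cup\{u_{j-1},u_{j+1}\}\cup\{b\}$ (for $2\le j<n-1$), which contains many edges, e.g.\ $v_{j+1}v_{j+2}$, $v_{j+1}u_{j+1}$ and $u_{j-1}u_{j+1}$. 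Likewise the reason for excluding interior $v_i$ --- ``no room for an independent pair'' among candidate images of $u_2,u_3$ --- is false: $N(v_i)$ contains the non-adjacent pairs $(u_j,u_{j+1})$ for $j\ge i$, $(v_{i+1},u_i)$ and $(v_{i-1},v_{i+1})$, so induced $P_3$'s of the required shape do exist in these neighbourhoods. The genuine obstruction in all these cases is the \emph{joint} compatibility of the neighbourhood and non-neighbourhood patterns of $a$, which is exactly the multi-case analysis you acknowledge but do not perform; asserting that it ``is rigid enough that only $a$ and $b$ survive'' is the content of the lemma, not an argument for it.

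Stage two has the same problem in a sharper form: the ``isolated edge'' inference is logically invalid. An embedding is not surjective, so nothing forces the edge $(f(v_1),f(u_1))$ to be an isolated edge of the graph induced on the non-neighbours of $a$; the only requirements visible on that side are that $f(v_1)\sim f(u_1)$ and that some third non-neighbour $f(v_3)$ is non-adjacent to both, which is satisfied, for instance, by $f(v_1)=v_5$, $f(u_1)=v_6$, $f(v_3)=v_9$ (for large $n$). Ruling out such configurations requires the cross-constraints with $N(a)$ (e.g.\ that $f(v_1)$ is also adjacent to all of $f(v_2),f(u_2),f(u_3)$, that $f(u_1)$ is adjacent to $f(u_3)$ but not to $f(v_2),f(u_2)$, etc.), and checking these is again a nontrivial case analysis that the proposal omits. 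For comparison, the paper anchors on $v_2$ instead: since $v_1,v_3,a$ form an independent triple in $N(v_2)$, it first rules out $f(v_2)\in\{u_1,\dots,u_n\}$ through four explicit cases, then observes that no $v_i$ with $i\notin\{2,n-1\}$ has an independent triple of neighbours, which collapses the problem to $f(v_2)\in\{v_2,v_{n-1}\}$ and makes the propagation short. As it stands, your write-up is a plan whose two load-bearing claims are unsupported (and, as stated, incorrect), so there is a genuine gap.
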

 
 \begin{proof}
    As before, we write $V:=\{v_1,\dots,v_n\}\subseteq V(\Hcal_n)$ and $U:=\{u_1,\dots,u_n\}\subseteq V(\Hcal_n)$. We shall consider the possible images of the vertex $v_2$. Suppose that $f(v_2)=u_i$ for some $i \in [n]$. Clearly, since the vertices $v_1,v_3,a$ are pairwise non-adjacent, we cannot have $f[\{v_1,v_3,a\}] \subseteq U$. We hence distinguish cases. 
    \begin{enumerate}
        \item Suppose that $v_1,v_3,a$ are all mapped to vertices in $V$ under $f$. Since these are non-adjacent, we must have $f[\{v_1,v_3,a\}]=\{v_m,v_r,v_\ell\}$ for some $m + 2 < r + 1 < \ell \leq i$. Now, consider $f(u_1)$; this must be some vertex in $\Hcal_n$ which is adjacent to only one of $v_m,v_r,v_\ell$ and not adjacent to $u_i$. This necessarily implies that $f(u_1)=v_{i+1}$, $f(v_1)=v_\ell$ while $\ell = i$. Consider $f(u_2)$; this must be a vertex non-adjacent to $v_{i+1}$, and adjacent to $u_i,v_i$ and exactly one of $\{v_m,v_r\}$. From this we deduce that $f(u_2)=v_{i-1}$, $f(a)=v_r$, and $r=i-2$. Finally, the vertex $f(u_3)$ must be adjacent to $v_m,v_{i-2},v_i,u_i,v_{i+1}$ and non-adjacent to $v_{i-1}$; obviously no such vertex exists in $\Hcal_n$, and we thus obtain a contradiction. 
        \item Suppose that two of $v_1,v_3,a$ are mapped to vertices in $V$ and one is mapped to a vertex in $U$. In this case we must have that  $f[\{v_1,v_3,a\}]=\{u_m,v_r,v_\ell\}$ for some $m + 1 < r + 1 < \ell \leq i$. Consider $f(u_1)$; this must be non-adjacent to $v_i$ and adjacent to exactly one of $u_m,v_r,v_\ell$. This further results in two distinct cases. If $f(u_1)=v_{i+1}$, then we have $f(v_1)=v_\ell$ and $\ell=i$, which leads to a contradiction with an analogous argument to the above. If $f(u_1)=u_{i-1}$, then necessarily $f(v_1)=v_r$ while $m=i-2, r=i-1, \ell=i$. Considering $f(u_2)$, we now see that this vertex must be non-adjacent to $u_{i-1}$ and adjacent to $u_i,v_{i-1}$ and exactly one of $\{u_{i-2},v_i\}$; evidently there is no such vertex in $\Hcal_n$ and we thus obtain a contradiction. 
        \item Suppose that exactly one of $v_1,v_3,a$ is mapped to a vertex in $V$ and two are mapped to vertices of $U$. This forces that $f[\{v_1,v_3,a\}]=\{u_{m-1},u_{m},v_\ell\}$ for some $m < \ell \leq i$ and $m + 1 < i$. Again, consider $f(u_1)$; this is non-adjacent to $u_i$ and adjacent to exactly one of $\{u_m,u_{m+1},v_\ell\}$. Once again this leads to two options. If $f(u_1)=v_{i+1}$, then we have $f(v_1)=v_\ell$ and $\ell=i$, which leads to a contradiction as in Case $1$. On the other hand, if $f(u_1)=u_{i-1}$ then we necessarily obtain that $f(v_1)=u_{m-1}$ while $m=i-2$ and $\ell=i$. The vertex $f(u_2) \in \Hcal_n$ must then be non-adjacent to $u_{i-1}$ and adjacent to $u_{i-1},u_{i}$ and exactly one of $u_{i-2},v_i$; since there is no such vertex in $\Hcal_n$ we once again obtain a contradiction. 
        \item Suppose that one of $v_1,v_3,a$ is mapped to $a$ or $b$ under $f$. Since $u_3$ is adjacent to all of $v_1,v_2,v_3,a$ it must necessarily be that $f(u_3)=u_m$ for some $m > i+1$. The vertex $f(u_2)$ must then be non-adjacent to $u_m$, and adjacent to $u_i$ and exactly two of $f(v_1),f(v_3),f(a)$. As no such vertex exists in this case, we obtain a contradiction. 
\end{enumerate}
        Since the above cases lead to a contradiction, we see that $f(v_2) \notin U$. Since no $v_i$ for $i \in [n] \setminus \{2,n-1\}$ has three neighbours which induce an independent set, this necessarily implies that $f(v_2)$ is equal to $v_2$ or $v_{n-1}$. Assume that $f(v_2)=v_2$. Again, since $v_1,v_3,a$ share no edges, we must necessarily have $f[\{v_1,v_3,a\}]=\{v_1,v_3,a\}$. Since $u_3$ is adjacent to all of $v_1,v_2,v_3,a$ we see that $f(u_3)=u_m$ for some $m \geq 3$. As $u_2$ is non-adjacent to $u_3$ and adjacent to $v_2$ to exactly two of $v_1,v_3,a$, we see that $f(u_3)=u_3$ and $f(u_2)=u_2$, which in turn ensure that $f$ is the inclusion map. By similar reasoning, we deduce that if $f(v_2)$ is equal to $v_{n-1}$ then $f = \phi_n$ as required.  
\end{proof}

\begin{lemma}\label{lem:Iembeds}
Let $n\geq 7$ and $f:\Ical_n \to \Hcal_n$ be an embedding. Then $f$ is the inclusion map.   
\end{lemma}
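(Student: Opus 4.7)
The plan is to apply Lemma~\ref{lem:embeds2} twice --- once on the ``left'' half of $\Ical_n$ and once on the ``right'' half --- and then to eliminate the only remaining non-trivial case by a single cross-edge check.

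First, I would write $W_L := \{v_1,v_2,v_3,u_1,u_2,u_3,a\}$ and $W_R := \{v_{n-2},v_{n-1},v_n,u_{n-2},u_{n-1},u_n,b\}$, so that $V(\Ical_n) = W_L \sqcup W_R$ and $\Ical_n[W_L] = \Ical'$. A routine edge-by-edge check, using the defining adjacencies of $\Hcal_n$, shows that the bijection $\psi$ sending $(v_1,v_2,v_3,u_1,u_2,u_3,a) \mapsto (v_{n-2},v_{n-1},v_n,u_{n-2},u_{n-1},u_n,b)$ restricts to an isomorphism $\Ical' \to \Ical_n[W_R]$; on vertices this is literally the same map as $\phi_n$.

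Next, the restriction $f\restr W_L \colon \Ical' \to \Hcal_n$ is an embedding, so by Lemma~\ref{lem:embeds2} it is either the inclusion or $\phi_n$; equivalently, $f[W_L] \in \{W_L, W_R\}$. Similarly, $(f\restr W_R) \circ \psi \colon \Ical' \to \Hcal_n$ is an embedding, so by Lemma~\ref{lem:embeds2} again it equals the inclusion or $\phi_n$. Unpacking via $\psi$, either $f\restr W_R$ is the inclusion (with image $W_R$) or $f\restr W_R = \psi^{-1}$ (with image $W_L$). Injectivity of $f$ now forces $f[W_L] \neq f[W_R]$, so either $f$ is already the inclusion, or $f$ ``swaps'' the two halves by acting as $\phi_n$ on $W_L$ and as $\psi^{-1}$ on $W_R$.

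To rule out the swap, I would observe that $v_1 u_{n-2}$ is an edge of $\Hcal_n$ since $1 \leq n-2$, hence also of $\Ical_n$, but under the swap this pair is mapped to $(v_{n-2}, u_1)$, which is a non-edge of $\Hcal_n$ since $n-2 > 1$; this contradicts $f$ preserving adjacency. Consequently $f$ must be the inclusion map. The main obstacle is essentially bookkeeping --- formalising the ``right-half analogue'' of Lemma~\ref{lem:embeds2} through the isomorphism $\psi$ and keeping the four image-cases straight --- but no new ideas beyond Lemma~\ref{lem:embeds2} are required.
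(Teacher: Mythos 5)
Your proof is correct and follows essentially the same route as the paper: restrict to the two copies of $\Ical'$, invoke \Cref{lem:embeds2} on each, use injectivity to reduce to the inclusion or the swap, and kill the swap with a single cross-edge. The only differences are cosmetic --- you spell out the isomorphism $\psi$ and the four image cases explicitly, and you use the cross-edge $v_1u_{n-2}\mapsto(v_{n-2},u_1)$ where the paper uses $v_2u_{n-2}\mapsto(v_{n-1},u_1)$; both checks are valid.
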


\begin{proof}
    Let $f: \Ical_n \to \Hcal_n$ be an embedding. It follows by \Cref{lem:embeds2} that either $f$ is the inclusion map, or it is the map given by swapping the two induced copies of $\Ical'$, i.e. the map 
    \[ (v_1,v_2,v_3,u_1,u_2,u_3,a) \mapsto (v_{n-2},v_{n-1},v_n,u_{n-2},u_{n-1},u_n,b);\]
    \[ (v_{n-2},v_{n-1},v_n,u_{n-2},u_{n-1},u_n,b)\mapsto (v_1,v_2,v_3,u_1,u_2,u_3,a).\]
    Since $u_{n-2}$ is adjacent to $v_2$, the latter case would imply that $u_1$ is adjacent to $v_{n-1}$, which is a contradiction. Hence, $f$ is the inclusion map as claimed. 
\end{proof}

\begin{proposition}
    For each $n\geq 7$ the graphs $\Hcal_n$ are minimal induced models of $\phi$. 
\end{proposition}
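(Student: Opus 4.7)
The proof splits into two parts: first verifying that $\Hcal_n \models \phi$, and then verifying that no proper induced subgraph of $\Hcal_n$ satisfies $\phi$.

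For the first part, I would take as witnesses the natural tuple given by the inclusion $\Ical_n \hookrightarrow \Hcal_n$, identifying the gadget variables $\bar v, \bar u, a, b$ with the corresponding vertices $v_1,v_2,v_3,v_{n-2},v_{n-1},v_n,u_1,u_2,u_3,u_{n-2},u_{n-1},u_n,a,b$ of $\Hcal_n$. Under this choice $I(\bar v,\bar u,a,b)$ is satisfied immediately, so it suffices to verify that the implication $\phi_1 \land \psi_1 \to \phi_2 \land \psi_2$ holds. I would compute $U$ and $V$ explicitly in $\Hcal_n$, observe that the $U$-neighbourhoods of $V$-elements form a strictly decreasing chain in which consecutive pairs differ by a single element (and symmetrically for $V$-neighbourhoods of $U$-elements), and then verify each of $\chi_1, \chi_2, \chi_3, \xi_1, \xi_2, \xi_{2^*}, \xi_3$, as well as the successor clauses of $\phi_2$ and $\psi_2$, by direct inspection using this chain structure.

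For minimality, let $H'$ be a proper induced subgraph of $\Hcal_n$. By \Cref{lem:Iembeds}, the only embedding of $\Ical_n$ into $\Hcal_n$ is the inclusion, so composing with $H' \hookrightarrow \Hcal_n$ shows that any witness tuple in $H'$ satisfying $I$ must correspond to $\Ical_n \subseteq H'$. If some vertex of $\Ical_n$ is missing from $H'$, no such witness exists and $\phi$ fails vacuously. Otherwise the removed vertex is some middle $v_k$ or $u_k$ with $4 \leq k \leq n-3$, and the witness tuple is forced. The antecedent $\phi_1 \land \psi_1$ still holds in $H'$, essentially by the same neighbourhood computation as before (removing a middle vertex only shortens the chain without disturbing comparability or the unit-step property). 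The consequent, however, fails: removing $v_k$ severs the chain of $V$-adjacent successors at the vertex just above it in the pre-order, which then loses its only strictly-larger $V$-neighbour, so $\phi_2$ fails there; an analogous argument for $u_k$ shows that $\psi_2$ fails.

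The main technical obstacle is the careful case analysis in the minimality part, where one must track precisely how the $U$- and $V$-neighbourhoods change when a single middle vertex is removed, and confirm that the resulting deletion produces an explicit failure of $\phi_2$ or $\psi_2$ while leaving $\phi_1 \land \psi_1$ intact. The uniqueness quantifiers in $\chi_3$ and $\xi_3$ require particular care, since deleting a $u_k$ could in principle affect the unique-witness property for some adjacent $V$-pair, and one must check that the remaining elements of $U$ still supply the required unique witness.
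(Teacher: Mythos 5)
Your proposal is correct and follows essentially the same route as the paper: verify $\Hcal_n \models \phi$ via the natural gadget witnesses and the chain structure of the $U$- and $V$-neighbourhoods, then for a proper induced subgraph use \Cref{lem:Iembeds} to force the witness tuple, check the antecedent $\phi_1 \land \psi_1$ survives vertex deletions, and exhibit a failure of $\phi_2$ or $\psi_2$ at a neighbour of a missing middle vertex. The only sketch-level detail to pin down is choosing the missing index extremally (e.g.\ maximal missing $v_k$, minimal missing $u_k$) so that the vertex that loses its successor witness is guaranteed to be present in the subgraph, exactly as the paper does.
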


\begin{proof}
    We fix some $n\geq 7$. We first argue that $\Hcal_n\models \phi$ for every $n \geq 7$. Indeed, we clearly have that 
    \[\Hcal_n \models I(v_1,v_2,v_3,v_{n-2},v_{n-1},v_n,u_1,u_2,u_3,u_{n-2},u_{n-1},u_n,a,b).\]
    Moreover, the set $U:=\{u_1,\dots,u_n\}\subseteq V(\Hcal_n)$ is precisely the set of neighbours of $v_1$ which are not $v_2$, while the set $V:=\{v_1,\dots,v_n\}\subseteq V(\Hcal_n)$ is precisely the set of non-neighbours of $v_1$ which are not $a$ or $b$. Evidently, we then have that for every vertex $v_i \in V\setminus \{v_1\}$ the vertex $v_{i-1} \in V$ is adjacent to $v_i$ and its neighbourhood over $U$ strictly contains that of $v_i$. Consequently $\Hcal_n \models \phi_2(\bar v,\bar u,a,b)$. Likewise, for every vertex $u_i \in U \setminus \{u_n\}$ the vertex $u_{i+1} \in U$ is non-adjacent to $u_i$ and its neighbourhood over $V$ strictly contains that of $u_i$. It follows that $\Hcal_n \models \psi_2(\bar v,\bar u,a,b)$, and so $\Hcal_n \models \phi$ as required. 
    
    Now, suppose that $H$ is a proper induced subgraph of $\Hcal_n$, and assume for a contradiction that $H \models \phi$, i.e. there are vertices $x_1,\dots,x_6,y_1,\dots,y_6,\alpha,\beta$ of $H$
    \[ H \models ( I(\bar x,\bar y,\alpha,\beta) \land 
[\phi_1(\bar x,\bar y,\alpha,\beta) \land \psi_1(\bar x,\bar y,\alpha,\beta) \to \phi_2(\bar x,\bar y,\alpha,\beta) \land \psi_2(\bar x,\bar y,\alpha,\beta) ]).\]
    Since these vertices induce a copy of $\Ical_n$, it follows by \Cref{lem:Iembeds} that 
    \[ (x_1,x_2,x_3,x_4,x_5,x_6,y_1,y_2,y_3,y_4,y_5,y_6,\alpha,\beta)=\]\[(v_1,v_2,v_3,v_{n-2},v_{n-1},v_n,u_1,u_2,u_3,u_{n-2},u_{n-1},u_n,a,b),\]
    and so $\Ical_n \leq H \lneq \Hcal_n$. Moreover, letting $U':= U \cap V(H)$ and $V':= V \cap V(H)$ we see that
    \begin{itemize}
        \item the elements in $V'$ have pairwise comparable neighbourhoods over $U'$, and the elements of $U'$ have pairwise comparable neighbourhoods over $V'$;
        \item the only neighbour of $v_n$ in $U'$ is $u_n$, and the only neighbour of $u_1$ in $V'$ is $v_1$;
        \item $u_n$ is adjacent to every element of $V$;
        \item if $x,y \in V'$ are adjacent and the $U'$-neighbours of $y$ are strictly more than the $U'$-neighbours of $x$ then there is some $i \in [n-1]$ such that $y=v_i$ and $x=v_{i+1}$, and there is a unique vertex in $U'$ that is adjacent to $y$ and not adjacent to $x$, namely $u_i$;
        \item if $x,y \in U'$ are adjacent and the $V'$-neighbours of $y$ are strictly more than the $V'$-neighbours of $x$ then there is some $i \in [n-1]$ such that $y=v_{i+1}$ and $y=v_{i}$, and there is a unique vertex in $V'$ that is adjacent to $y$ and not adjacent to $x$, namely $v_i$.
    \end{itemize}
    It follows that $H \models (\phi_1(\bar v,\bar u,a,b) \land \psi_1(\bar v,\bar u,a,b))$. Since $H \models \phi$ this implies that $H \models (\phi_2(\bar v,\bar u,a,b) \land \psi_2(\bar v,\bar u,a,b))$. However, since $V(H) \subsetneq V(\Hcal_n)$, there is some $i \in [4,n-3]$ such that $v_i \notin V(H)$ or $u_i \notin V(H)$. Assume the former, and let $i \in [4,n-3]$ be maximal such that $v_i \notin V(H)$. It follows that there is no vertex in $x \in V'$ that is adjacent to $v_{i+1}$ and its neighbourhood over $U'$ is strictly greater than that of $v_{i+1}$, contradicting that $H \models \psi_1(\bar v,\bar u,a,b)$. By a symmetric argument we obtain a contradiction if $u_i \notin V(H)$, and thus follows that $H \not\models \phi$. 
\end{proof}

\begin{theorem}
    Extension preservation fails on any hereditary graph class containing the graphs $\Hcal_n$ for arbitrarily large $n \in \N$. 
\end{theorem}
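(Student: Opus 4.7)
The plan is to combine the preceding three propositions with \Cref{lem:minimalmodels}. Let $\C$ be a hereditary graph class containing $\Hcal_n$ for arbitrarily large $n \in \N$. Since $\phi$ is preserved by embeddings over the class of all finite graphs (by the first proposition of this section), it is in particular preserved by embeddings over $\C$. By \Cref{lem:minimalmodels}, it suffices to exhibit infinitely many pairwise non-isomorphic minimal induced models of $\phi$ in $\C$.

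The natural candidates are the graphs $\Hcal_n$ themselves. The previous proposition establishes that each $\Hcal_n$ is a minimal induced model of $\phi$ in the class of all finite graphs; since $\C$ is hereditary, any proper induced subgraph of $\Hcal_n$ lying in $\C$ also lies in the class of all finite graphs, so the minimality argument transfers verbatim to $\C$. Since the graphs $\Hcal_n$ have pairwise distinct cardinalities, they constitute infinitely many minimal induced models of $\phi$ in $\C$.

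By the contrapositive of \Cref{lem:minimalmodels}, $\phi$ cannot be equivalent over $\C$ to any existential sentence, and hence $\C$ does not have the extension preservation property. In particular, since the graphs $\Hcal_n$ can be seen to have cliquewidth at most $4$, this establishes the failure of extension preservation on the class $\mathsf{Clique}_k$ for every $k \geq 4$.

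The argument is essentially a bookkeeping combination of the material already developed, so there is no substantive obstacle; the only point requiring care is ensuring that minimality of $\Hcal_n$ as an induced model of $\phi$ is preserved when one restricts attention from the class of all finite graphs to the hereditary subclass $\C$, which follows immediately from hereditariness.
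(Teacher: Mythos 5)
Your proposal is correct and follows essentially the same route as the paper: $\phi$ is preserved by extensions over $\C$ because it is preserved over all finite graphs, the graphs $\Hcal_n$ give infinitely many minimal induced models of $\phi$ in $\C$ (minimality transfers trivially since proper induced subgraphs in $\C$ form a subset of all proper induced subgraphs), and \Cref{lem:minimalmodels} then rules out equivalence to an existential sentence. The only cosmetic remark is that the transfer of minimality does not actually need hereditariness (that hypothesis is needed for \Cref{lem:minimalmodels} itself), but this does not affect correctness.
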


\begin{proof}
    Let $\C$ be a class of graphs containing the graphs $\Hcal_n$ for arbitrarily large $n$. Since the formula $\phi$ is preserved under extensions over the class of all finite graphs, it is in particular preserved under extensions over $\C$. Since $\C$ is hereditary and $\phi$ has infinitely many minimal induced models in $\C$, namely the graphs $\Hcal_n$, it follows by \Cref{lem:minimalmodels} that $\phi$ is not equivalent to an existential formula over $\C$. 
\end{proof}

Finally, we observe that the graphs $\Hcal_n$ have bounded (linear) cliquewidth, which is easily seen to be at most $4$. For this, we crucially use the fact that successive pairs are adjacent on one side and non-adjacent on the other. One could simplify the construction, e.g. by using adjacency to denote succession on both sides, but this would slightly increase the cliquewidth. 
    
\begin{corollary}
    Extension preservation fails on $\mathsf{Clique}_k$ for every $k \geq 4$.
\end{corollary}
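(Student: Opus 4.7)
The plan is to combine the previous theorem with the fact that each $\Hcal_n$ has linear cliquewidth at most $4$. Since $\mathsf{Clique}_k$ is hereditary (a $k$-expression for $G$ restricts to one for any induced subgraph by skipping the create operations for removed vertices) and $\mathsf{Clique}_4 \subseteq \mathsf{Clique}_k$ for every $k \geq 4$, this suffices.

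To establish $\Hcal_n \in \mathsf{Clique}_4$ I would construct an explicit $4$-expression that processes the blocks $(v_i, u_i)$ in order $i = 1, \dots, n$, slotting in $a$ during step $2$ and $b$ during step $n-1$. The guiding combinatorial observation is that after block $i$ has been fully handled, each of $v_1, \dots, v_{i-1}, u_1, \dots, u_{i-1}$ has the same pending adjacencies to later vertices, namely edges to all of $u_{i+1}, \dots, u_n$ and no edges to any future $v_k$. Indeed, $v_j$ (for $j \leq i-1$) is adjacent to every future $u_k$ because $j \leq k$, while $u_j$ (for $j \leq i-1$) is adjacent to every future $u_k$ because $|j-k| \geq 2$ whenever $k \geq i+1$. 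This is exactly where the asymmetry between the two sides (adjacent succession on the $v$-side, non-adjacent succession on the $u$-side) is used: it is this convention that makes the two future-edge patterns coincide, so that all old vertices can share a single label. The two most recent vertices $v_i$ and $u_i$ still differ from this old batch---$v_i$ awaits the path edge to $v_{i+1}$, and $u_i$ must not be joined with $u_{i+1}$---so each needs its own label, leaving exactly one label free for scratch use.

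Concretely, I would maintain the between-step invariant that label $1$ holds $\{v_1, \dots, v_{i-1}\} \cup \{u_1, \dots, u_{i-1}\}$ (together with $a$ once added and $b$ once added), label $2$ holds $\{v_i\}$, label $3$ holds $\{u_i\}$, and label $4$ is empty. The transition from step $i$ to step $i+1$ uses label $4$ as scratch in two mini-phases: first, create $v_{i+1}$ in label $4$, join $4$ with $2$ to insert the edge $v_i v_{i+1}$, and rename $2 \to 1$ and $4 \to 2$; second, create $u_{i+1}$ in label $4$, join $4$ with $1$ (adding edges from $u_{i+1}$ to every older vertex, including $a$ and $b$ when present) and $4$ with $2$ (adding $u_{i+1} v_{i+1}$), and finally rename $3 \to 1$ and $4 \to 3$. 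The crucial point is that label $3$ is never joined with anything, which is what prevents the spurious edge $u_{i+1} u_i$. The vertex $a$ is slotted into step $2$ by creating it in label $4$ after $v_2$ has moved to label $2$, joining $4$ with $2$ to produce $a v_2$, and then renaming $4 \to 1$; the generic first join of every subsequent step then automatically supplies $a u_j$. Vertex $b$ is handled by a symmetric sub-routine at step $n-1$ in which its edges to $v_{n-1}$ and $u_{n-1}$ are inserted before moving it to label $1$, so that $b u_n$ appears for free at step $n$.

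The main obstacle is routine but genuinely delicate verification, step by step, that no spurious edges are introduced and no wanted ones missed. Two points warrant care: (i) that $u_{i+1}$ never becomes adjacent to $u_i$, which holds because $u_i$ is the sole inhabitant of label $3$ and that label is never on the receiving end of a join; and (ii) that older vertices in label $1$ do not accidentally acquire edges to $v_{i+1}$, which holds because $v_{i+1}$ only visits labels $4$ and $2$, and the one join involving them happens while label $2$ still contains only $v_i$. A short induction on $i$ then confirms that after all $n$ steps the constructed labeled graph is precisely $\Hcal_n$, giving $\Hcal_n \in \mathsf{Clique}_4$ and hence the corollary by applying the previous theorem to $\mathsf{Clique}_k$.
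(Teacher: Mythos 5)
Your proposal is correct and follows the same route as the paper: apply the preceding theorem to the hereditary class $\mathsf{Clique}_k \supseteq \mathsf{Clique}_4$ after checking $\Hcal_n \in \mathsf{Clique}_4$, a fact the paper merely asserts as "easily seen" (noting, as you do, that adjacent succession on the $v$-side and non-adjacent succession on the $u$-side is what lets all processed vertices share one label). Your explicit linear $4$-expression and invariant check out (the only cosmetic quibble is that the $b$ sub-routine does join the scratch label with label $3$, harmlessly, despite your earlier "label $3$ is never joined" remark), so you have simply supplied the verification the paper omits.
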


As witnessed by the above, orders appear to provide strong counterexamples to extension preservation. In the next section we explore preservation in certain monadically stable classes, where no such issues are expected to arise.

\section{Extension preservation on strongly flip-flat classes}\label{sec:strflipflat}
Local information on dense graphs can be as complicated as global information, as for instance is the case with cliques. This fact seemingly renders locality unhelpful in the context of dense graph classes. Nonetheless, our understanding of tame classes indicates that it is still possible to recover meaningful local information, after possibly measuring distance in an alternative metric which comes from ``sparsifying'' our graphs in a controlled manner. The flip operation, which is central to the emerging theory of dense graph classes, plays precisely this role. We introduce it in the following definition.

\begin{definition}
    Let $G$ be a graph and $k \in \N$. A $k$-partition $P$ of $G$ is a partition of the vertex set into $k$ labelled parts $P_1,\dots,P_k$, i.e. $V(G)=\bigcup_{i \in [k]} P_i$ and $P_i \cap P_j = \emptyset$ for $i \neq j$. By a $k$-flip $F$ we simply a symmetric subset of $[k]^2$, i.e. a set of tuples $F=\{(i,j): i,j \in [k]\}$ such that $(i,j) \in F \iff (j,i) \in F$. 
    Given a $k$-partition $P$ of $G$ and a $k$-flip $F$ we define the graph $G \triangle_F P$ on the same vertex set as $G$ and on the edge set 
    \[ E(G \triangle_F P) := E(G) \triangle \{(u,v): u\neq v, u \in P_i, v \in P_j, \text{ and } (i,j) \in F\}.\]
    where $\triangle$ denotes the symmetric difference operation. 
\end{definition}

We note that the notation for flips existing in the literature uses the notation $\oplus$ rather than $\triangle$ (e.g. in \cite{indiscernibles}); here we have opted for the latter as the symbol $\oplus$ was used in \cite{extensionspreservation} and \cite{dawar2024preservation} to denote the amalgamation operation. Moreover, instead of partitioning our graph, we may define $k$-flips by applying a sequence of at most $k$ atomic operations, each one switching the edges and non-edges between two arbitrary subsets $A,B$ of our vertex set. Evidently, these definitions are equivalent up to blowing up the number of flips by a value that only depends on $k$, while we have opted for the partition definition here to simplify our construction in \Cref{def:flipsum} below.  

\begin{definition}\label{def:flipflat}
    We say that a hereditary class of graphs $\C$ is \emph{flip-flat}\footnote{The original definition of flip-flatness in \cite{indiscernibles} is the extension of our definition here to the hereditary closure of the class.} if for every $r \in \N$ there exist $k_r \in \N$ and a function $f_r: \N \to \N$ satisfying that for every $m \in \N$ and every $G \in \C$ of size at least $f_r(m)$ there is a $k_r$-partition $P$ of $G$, a $k_r$-flip $F \subseteq [k_r]^2$, and a set $A \subseteq V(G)$ of size at least $m$ which is $r$-independent in $G \triangle_F P$. If in the above $k_r:=k \in \N$ does not depend on $r$, then we say that $\C$ is \emph{strongly flip-flat}. 
\end{definition}

It was established in \cite[Theorem 1.3]{indiscernibles} that a hereditary class of graphs is flip-flat if, and only if, it is monadically stable. In particular, every transduction of a quasi-wide class is flip-flat. The qualitative difference between strong flip-flatness and flip-flatness is precisely the same as that of almost-wideness and quasi-wideness. 
We make this idea precise in the following straightforward proposition, which establishes that every transduction of a uniformly almost-wide class is strongly flip-flat. For this, we use the following lemma from \cite[Lemma H.3]{flipwidth}, which follows easily from Gaifman's locality theorem.

\begin{lemma}[Flip transfer lemma, \cite{flipwidth}]\label{lem:fliptransfer}
    There exists a (computable) function $\Xi : \N^3 \to \N$ satisfying the following. Fix $k,c,q \geq 1$ and $\Tcal_{\delta,\phi}$ a transduction involving $c$ colours and formulas of quantifier rank at most $q$. Let $G,H$ be graphs such that $H \in T_{\delta,\phi}(G)$. Then for every $k$-partition $P$ of $G$ and $k$-flip $F$ there exists a $\Xi(k,c,q)$-partition $P_H$ of $H$ and a $\Xi(k,c,q)$-flip $F_H$ such that for all $u,v \in V(H)$:
    \[ \dist_{G \triangle_F P}(u,v) \leq 2^q \cdot \dist_{H \triangle_{F_H} P_H}(u,v).\]
\end{lemma}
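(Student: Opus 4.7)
The plan is to reduce the statement to a single-step locality claim: if $u, v \in V(H)$ are adjacent in $H \triangle_{F_H} P_H$, then $\dist_{G \triangle_F P}(u, v) \leq 2^q$. Once this is established, the distance bound in the lemma follows by applying the triangle inequality along a shortest $u$-$v$ path in $H \triangle_{F_H} P_H$: such a path has length $\dist_{H \triangle_{F_H} P_H}(u, v)$, and each of its edges contributes at most $2^q$ to the distance in $G \triangle_F P$.

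To reduce to a pure transduction question, I would first absorb the flip into the transduction. The edge relation of $G \triangle_F P$ is quantifier-free definable in the expansion of $G$ by $k$ unary predicates $Q_1, \dots, Q_k$ encoding the parts of $P$, and conversely the edge relation of $G$ is quantifier-free definable in $(G \triangle_F P, Q_1, \dots, Q_k)$ by the same formula. Consequently $H$ arises as a transduction of $G \triangle_F P$ with $c + k$ colours and formulas of quantifier rank at most $q$; write $\phi'(x, y)$ for the edge formula of this composite transduction.

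Now I apply locality to $\phi'(x, y)$. A sufficiently sharp Ehrenfeucht--Fra\"{\i}ss\'{e} argument (a refinement of the Gaifman statement cited above) shows that whenever $u$ and $v$ have disjoint $2^{q-1}$-neighbourhoods in the $(c+k)$-coloured expansion of $G \triangle_F P$, the truth value of $\phi'(u, v)$ depends only on the $2^{q-1}$-local types of $u$ and $v$ individually, together with the basic local sentences of the structure. I then define $P_H$ as the partition of $V(H)$ by $2^{q-1}$-local type in the coloured expansion of $G \triangle_F P$; the number of such types is bounded uniformly by some $N = \Xi(k, c, q)$, so $P_H$ is an $N$-partition. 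I define the symmetric $N$-flip $F_H$ by declaring parts $P_H^i$ and $P_H^j$ to be flipped precisely when the ``far-apart default'' value of $\phi'$ on vertices of types $i, j$---dictated by the fixed basic local sentences and the one-variable local data---yields adjacency in $H$. Then for any $u, v$ with $\dist_{G \triangle_F P}(u, v) > 2^q$, the types of $u, v$ determine the value of $\phi'(u, v)$, the flip $F_H$ cancels this default, and $u, v$ end up non-adjacent in $H \triangle_{F_H} P_H$; contrapositively, adjacency forces $\dist_{G \triangle_F P}(u, v) \leq 2^q$, which is exactly the single-step claim.

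The main obstacle is securing the sharp locality radius $2^q$: the Gaifman bound of $7^q$ quoted in the paper is too loose, so one must carry out the finer rank-$q$ Ehrenfeucht--Fra\"{\i}ss\'{e} analysis, exploiting that $\phi'$ has two free variables and each quantifier at most doubles the effective radius. The remaining pieces---bounding the number of local types uniformly by a computable $\Xi(k, c, q)$, and verifying coherence of the constructed $P_H$ and $F_H$---are routine once this locality bound is in place.
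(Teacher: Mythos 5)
The paper does not actually prove this lemma --- it is imported verbatim from the flip-width paper (cited as Lemma H.3 there) and only remarked to ``follow easily from Gaifman's locality theorem'' --- so there is no in-paper argument to compare against. Your architecture is the natural one and almost certainly mirrors the intended proof: absorb the flip into the transduction by substituting the quantifier-free formula $E_F(x,y)$ for the edge atoms (this keeps the quantifier rank at $q$ and costs $k$ extra unary predicates, exactly as in the paper's $\phi^k$ construction); prove a single-edge bound ``adjacent in $H \triangle_{F_H} P_H$ implies $\dist_{G\triangle_F P}(u,v)\leq 2^q$''; define $P_H$ by local types in the coloured copy of $G\triangle_F P$ and $F_H$ by the default truth value of the (symmetrised) edge formula on far pairs of types; and conclude by the triangle inequality along a shortest path. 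All of these steps are sound: the number of rank-$q$ local types is computably bounded in $k,c,q$, the symmetrisation $\phi'(u,v)\lor\phi'(v,u)$ makes the default well-defined on unordered pairs of parts so $F_H$ is a legitimate symmetric flip, and the degenerate cases ($u=v$, infinite distance) are trivial.

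The one genuine gap is the step you yourself flag as the ``main obstacle'': the locality claim that for $u,v$ with disjoint $2^{q-1}$-neighbourhoods the value of $\phi'(u,v)$ is determined by the individual $2^{q-1}$-local types together with (Gaifman-style scattered) sentences of the structure. This is the entire quantitative content of the lemma, and you assert it rather than prove it; the slogan ``each quantifier at most doubles the effective radius'' is exactly where the bookkeeping is delicate, since the standard game argument maintains neighbourhoods of shrinking radii $2^{q-1}, 2^{q-2},\dots$ around all pebbles placed so far, and a careless version of it yields separation thresholds closer to $2^{q+1}$ than to $2^q$, or requires distance-aware types rather than plain pointed-neighbourhood types. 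So as written the proof is incomplete at its crux. Two mitigating remarks: first, running your construction with the Gaifman bound actually quoted in the paper gives the same statement with $2^q$ replaced by a factor of order $7^q$, and this weaker constant suffices for the only use of the lemma here (\Cref{prop:strongflip}, where one simply wides the radius to $7^q\cdot r$ instead of $2^q\cdot r$); second, if you do want the sharp constant, you must carry out the rank-$q$ Ehrenfeucht--Fra\"iss\'e analysis explicitly or cite it, rather than leave it as a remark.
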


\begin{proposition}\label{prop:strongflip}
    Every transduction of a uniformly almost-wide graph class is strongly flip-flat. 
\end{proposition}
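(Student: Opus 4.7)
The plan is to split the argument into two steps: first I establish that uniform almost-wideness already yields strong flip-flatness, and then I lift this to arbitrary transductions by invoking \Cref{lem:fliptransfer}.

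For the first step, fix $r,m \in \N$ and let $G \in \C$ be of size at least the almost-wideness threshold. By uniform almost-wideness we obtain a set $S \subseteq V(G)$ with $|S| \leq k_0$ (the bottleneck bound, independent of $r$) and a subset $A \subseteq V(G) \setminus S$ of size $m$ which is $r$-independent in the induced subgraph $G[V(G) \setminus S]$. To simulate the deletion of $S$ via flips, I partition $V(G)$ into the singletons $\{s\}$ for $s \in S$ together with the parts $P_T := \{v \in V(G) \setminus S : N_G(v) \cap S = T\}$ indexed by $T \subseteq S$, producing at most $K := k_0 + 2^{k_0}$ labelled parts. The flip $F$ consists of flipping each $P_T$ against each singleton $\{s\}$ with $s \in T$. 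Under this operation every vertex of $V(G) \setminus S$ is made non-adjacent to every element of $S$ in $G \triangle_F P$, while edges inside $V(G) \setminus S$ are left untouched. Consequently $S$ is disconnected from $V(G) \setminus S$ in $G \triangle_F P$, so distances between vertices of $V(G) \setminus S$ in $G \triangle_F P$ coincide with their distances in $G[V(G) \setminus S]$, and $A$ remains $r$-independent in $G \triangle_F P$. Since $K$ is independent of $r$, this yields strong flip-flatness of $\C$.

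For the second step, suppose $\Dcal \subseteq T_{\delta,\phi}(\C)$ where $T_{\delta,\phi}$ involves $c$ colours and formulas of quantifier rank at most $q$. Given $H \in \Dcal$ arising from some $\tau_E^c$-expansion $G^+$ of $G \in \C$, I apply the first step to $G$ with radius $r' := 2^q r$ and a target size $m'$ chosen large enough to force at least $m$ elements of the resulting $r'$-independent set to lie in $V(H)$. This produces a $K$-partition $P$ and $K$-flip $F$, together with a set $A \subseteq V(G)$ of size $m'$ that is $r'$-independent in $G \triangle_F P$. Invoking \Cref{lem:fliptransfer} then supplies a $\Xi(K,c,q)$-partition $P_H$ and a $\Xi(K,c,q)$-flip $F_H$ of $H$ satisfying $\dist_{G \triangle_F P}(u,v) \leq 2^q \cdot \dist_{H \triangle_{F_H} P_H}(u,v)$ for all $u,v \in V(H)$. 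Hence $A \cap V(H)$ is an $r$-independent set of size $\geq m$ in $H \triangle_{F_H} P_H$, and the number of parts $\Xi(K,c,q)$ depends only on the transduction, not on $r$.

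The main technical obstacle I foresee is precisely that of extracting sufficiently many elements of $V(H)$ from the $r'$-independent set $A \subseteq V(G)$, since $V(H) = \{v : G^+ \models \delta(v)\}$ can be an arbitrary subset of $V(G)$. I would handle this via the standard reduction that any transduction may be replaced by a quantifier-free one after enlarging the set of monadic parameters by a bounded amount: under this assumption, whether $v \in V(H)$ depends only on the colour expansion at $v$, so restricting $G^+$ to $V(H)$ still witnesses $H$ as a transduct. Applying the first step to $G[V(H)]$, which lies in the hereditary closure of $\C$ and hence inherits uniform almost-wideness, directly produces a bottleneck and an $r'$-independent set entirely inside $V(H)$, allowing one to take $m' = m$.
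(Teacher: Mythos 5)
Your overall skeleton matches the paper's: simulate the deletion of the bottleneck set $S$ by a flip over the partition into the adjacency classes over $S$ (plus singletons), observe that the number of parts ($\leq k_0+2^{k_0}$) is independent of $r$, and transfer to $H$ via \Cref{lem:fliptransfer} with the radius inflated by $2^q$. Your first step is correct, and it is exactly the (implicit) deletion-to-flip simulation behind the paper's choice of a $2^{k_\C}$-partition and flip.

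The gap lies in how you handle the obstacle you yourself identify, namely forcing the independent set to sit inside $V(H)$. The ``standard reduction that any transduction may be replaced by a quantifier-free one after enlarging the set of monadic parameters'' does not exist: adding colours lets you make the domain formula $\delta$ atomic (mark in $G^+$ the vertices satisfying it), but it cannot eliminate the quantifiers of the edge formula $\phi$. A quantifier-free edge formula over a unary expansion defines nothing more than a bounded flip of the original edge relation, so e.g. the transduction with edge formula expressing $\dist(u,v)\leq 2$ has no quantifier-free equivalent; consequently your claim that restricting $G^+$ to $V(H)$ still witnesses $H$ as a transduct is false in general, because the quantifiers of $\phi$ range over all of $G^+$ and truth values can change under restriction. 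Without that step, your second step either produces an $r'$-independent set in $G$ that may be entirely disjoint from $V(H)$ (no choice of $m'$ repairs this, as you note), or, if you apply your first step to $G[V(H)]$ as in your final sentence, you are no longer entitled to invoke \Cref{lem:fliptransfer}, whose hypothesis is $H \in T_{\delta,\phi}(G)$ for the very graph being flipped. The paper avoids any normalisation of the transduction: it keeps $G$ as the source graph throughout, applies uniform almost-wideness to the induced subgraph $G[V(H)]$ --- this is precisely where the ``uniformly'' hypothesis, i.e. almost-wideness of the hereditary closure, is used --- to obtain a bottleneck and a $(2^q\cdot r)$-independent set of size $m$ contained in $V(H)$, encodes the deletion of that bottleneck as a $2^{k_\C}$-flip of all of $G$, and only then applies \Cref{lem:fliptransfer} to the pair $(G,H)$. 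If you rewrite your second step along these lines, take care that the independence you feed into the transfer lemma is stated for the flipped graph to which the lemma's inequality refers, namely $G \triangle_F P$, rather than only for an induced subgraph of it.
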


\begin{proof}
    Let $\C$ be a uniformly almost-wide graph class and fix $k_\C \in \N$ witnessing this, so that for every $r,m \in \N$ there is $f_r(m) \in \N$ satisfying that every $G$ of size at least $f_r(m)$ in the hereditary closure of $\C$ contains an $r$-independent set of size $m$ after removing at most $k_\C$ elements. %in particular for every such $G \in \N$ there is a $2^{k_\C}$-partition $P$ and a $2^{k_\C}$-flip $F$ such that $G \triangle_F P$ contains an $r$-independent set of size $m$. 
    Let $\Dcal$ a class such that there is a transduction $T_{\delta,\phi}$ satisfying $\Dcal \subseteq T_{\delta,\phi}(\C)$. Let $c \in \N$ be the number of unary predicates used by $T$, and $q$ be the maximum of the quantifier ranks of $\delta$ and $\phi$. We argue that $\Dcal$ is strongly flip-flat with $k:=\Xi(2^{k_\C},c,q)$. 
    
    Indeed, fix $r,m \in \N$ and a graph $H \in \Dcal$ of size at least $f_{2^q\cdot r}(m)$. It follows that there exists some $G \in \C$ such that $H \in T_{\delta,\phi}(G)$, and since $f_{2^q\cdot r}(m)\leq |V(H)|$, we obtain by uniform almost-wideness that $G[V(H)]$ contains a $(2^q\cdot r)$-independent set of size $m$ after removing a set of size at most $k_\C$. In particular, there is a $2^{k_\C}$-partition $P$ of $G$ and a $2^{k_\C}$-flip $F$ such that $(G \triangle_F P)[V(H)]$ contains an $(2^q\cdot r)$-independent subset of size $m$; call this set $A$. Consequently, \Cref{lem:fliptransfer} implies that there is a $k$-partition $P$ of $H$ and a $k$-flip $F$ such that for all $a,b \in A \subseteq V(H)$
    \[ r = \frac{2^q\cdot r}{2^q} \leq \frac{\dist_{G \triangle_F P}(a,b)}{2^q} \leq \dist_{H \triangle_{F_H} P_H}(a,b),\]
    i.e. $A$ is an $r$-independent set of size $m$ in $H \triangle_{F_H} P_H$. It follows that $\Dcal$ is strongly flip-flat. 
\end{proof}

In particular, transductions of bounded degree classes, classes of bounded shrub-depth \cite{ganian2019shrub}, and transductions of proper minor-closed classes \cite[Theorem 5.3]{atseriaspreservation} are all strongly flip-flat. However, obtaining preservation via locality and wideness in the style of \cite{extensionspreservation, atseriaspreservation, dawar2010homomorphism, dawar2024preservation} 
additionally requires subtle closure assumptions. The proofs of the above articles are essentially structured into two parts. The first part argues via locality that for every formula $\phi$ preserved by extensions (or homomorphisms in the case of \cite{atseriaspreservation,dawar2010homomorphism,dawar2024preservation}) over a class $\C$ closed under substructures and disjoint unions there exist $r,m \in \N$ such that no minimal induced model of $\phi$ in $\C$ can contain an $r$-independent set of size $m$. In the second part, wideness is used to bound the size of minimal models of $\phi$, as large enough models would have to contain $r$-independent sets of size $m$, under the proviso that a bounded number of bottleneck points have been removed. To account for the removal of these points, we have to work with an adjusted formula $\phi'$ in an expanded vocabulary, together with suitably adjusted structures (\cite{ajtai} called these \emph{plebian companions}) on which we apply the argument of the first part. Working with $\phi'$, however, has translated the requirement of closure under disjoint unions to closure under a more involved operation which depends on the choice of bottlenecks. Consequently, preservation can fail on natural tame classes which do not satisfy this closure condition, e.g. for planar graphs \cite[Theorem 5.8]{dawar2024preservation}.  

In the context of vertex deletions, the corresponding operation was \emph{amalgamation over bottlenecks} \cite[Theorem 4.2]{dawar2024preservation}. Here, we must formulate a different operation to account for the fact that flips are required to witness wideness. This is precisely the construction below. 

\begin{definition}\label{def:flipsum}
    Given $k \in \N$, a graph $G$, a $k$-partition $P$ of $G$, and a $k$-flip $F\subseteq [k^2]$ we write $G\star_{(F,P)}G$ for the graph whose vertex set $V(G\star_{(F,P)}G):=V(G+G)$ is the same as the disjoint union of two copies of $G$, and whose edge set is 
    \[ E(G \star_{(F,P)}G):= E(G + G) \cup \{(u,v): u,v \text{ are in different copies of } G, u \in P_i, v \in P_j, (i,j) \in F\}.\]
    We call this the flip-sum of $G$ over $(F,P)$.   
\end{definition}

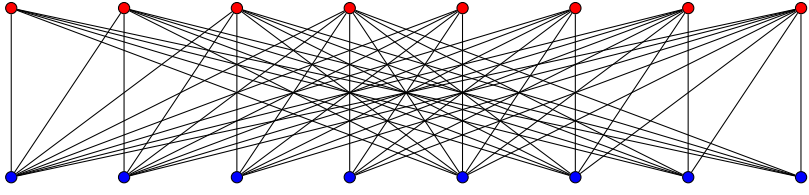
\begin{figure}[h!]
    \centering
    \begin{tikzpicture}[scale=1.5, every node/.style={circle, draw, inner sep=1.5pt}]
    % Define the positions of the vertices
    \foreach \i in {1,2,...,4} {
        \node[fill=red] (u\i) at (\i, 1.5) {};
        \node[fill=blue] (v\i) at (\i, 0) {};
        \node[fill=red] (a\i) at (\i+4,1.5) {};
        \node[fill=blue] (b\i) at (\i+4,0) {};
    }

    % Draw the edges
    \foreach \i in {1,...,4} {
        \foreach \j in {1,...,\i} {
            \draw (u\i) -- (v\j);
            \draw (a\i) -- (b\j);
        }
        \foreach \j in {1,...,4} {
            \draw (u\i) -- (b\j);
            \draw (v\i) -- (a\j);
        }
    }

\end{tikzpicture}
    \caption{The graph $H\star_{(F,P)}H$, where $H$ is the half-graph of order $4$, $P$ is the partition into red and blue vertices and $F=\{(1,2),(2,1)\}$.}
    \label{fig:enter-label}
\end{figure}

We now introduce the relevant translation for the formulas. 

\begin{definition}\label{def:phik}
    Given a $k$-flip $F\subseteq [k]^2$, consider the formula 
    \[ E_F(x,y):= E(x,y) \bigtriangleup_{(i,j) \in F} (P_i(x) \land P_j(y)). \]
    over the signature $\tau_{E}^k:=\tau_E\cup \{P_1,\dots,P_k\}$, where $\bigtriangleup_{(i,j) \in F}$ denotes the consecutive application of the $\mathsf{XOR}$ operator over all tuples $(i,j) \in F$. Given a $\tau_E$-formula $\phi$, we define the $\tau_E^k$-formula $\phi^k$ obtained from $\phi$ by replacing every atom $E(x,y)$ with the formula $E_F(x,y)$. Moreover, for every graph $G$ and $k$-partition $P$ we write $G_{(F,P)}$ for the $\{P_1,\dots,P_k\}$-expansion of $G \triangle_F P$ where each predicate is interpreted by the respective part of $P$. It is then clear from the definitions and the fact that the flip operation is involutive that
    \[ G \models \phi \iff G_{(F,P)} \models \phi^k.\]
\end{definition}

Our goal in \Cref{thm:main} is to start with a strongly flip-flat class and a formula $\phi$ and apply the argument of \cite[Theorem 4.3]{extensionspreservation} to the formula $\phi^k$ and the structures $G_{(F,P)}$. However, as previously explained, $\phi^k$ is not necessarily preserved under embeddings over $\C$. We can nonetheless use the following easy lemma in case that the class is closed under the desired flip-sums, which will be sufficient for our purposes.  

\begin{lemma}\label{lem:disjoint}
    Let $\C$ be a hereditary class of graphs and $\phi$ a formula preserved under extensions over $\C$. Fix a graph $G \in \C$, a $k$-partition $P$ of $G$, and a $k$-flip $F \subseteq [k]^2$. If $G\star_{(F,P)}G \in \C$ then 
    \[ G_{(F,P)} \models \phi^k \implies G_{(F,P)} + G_{(F,P)}[S] \models \phi^k\]
    for any $S \subseteq V(G)$. 
\end{lemma}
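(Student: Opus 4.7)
The plan is to use the semantic equivalence built into \Cref{def:phik} to shift the whole problem from $\tau_E^k$-structures back to plain graphs, and then invoke extension preservation of $\phi$ on a suitable induced subgraph of $G\star_{(F,P)}G$.

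First, I would unpack the hypothesis: since $G_{(F,P)}\models\phi^k$, the observation at the end of \Cref{def:phik} gives $G\models\phi$. Dually, to deduce the conclusion it suffices to exhibit a graph $H\in\C$ and a $k$-partition $P''$ of $H$ such that $H_{(F,P'')}$ is isomorphic to $G_{(F,P)}+G_{(F,P)}[S]$ as $\tau_E^k$-structures and such that $H\models\phi$: by \Cref{def:phik} applied once more, this forces $G_{(F,P)}+G_{(F,P)}[S]\models\phi^k$.

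The natural candidate is $H:=(G\star_{(F,P)}G)\bigl[V(G)\cup S'\bigr]$, where $S'$ denotes the image of $S$ in the second copy of $G$ inside $G\star_{(F,P)}G$, and $P''$ is the restriction to $V(H)$ of the $k$-partition of $G\star_{(F,P)}G$ extending $P$ to both copies. By hereditariness, $H\in\C$. One then verifies by direct inspection that $H\triangle_F P''=(G\triangle_F P)+(G\triangle_F P)[S]$ as graphs: the flip restricted to each copy recovers $G\triangle_F P$ on the first copy and $(G\triangle_F P)[S]$ on the second, while the between-copy contribution of the flip exactly cancels the between-copy edges added by the flip-sum construction in \Cref{def:flipsum}. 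Keeping track of the partition, this upgrades to the isomorphism $H_{(F,P'')}\cong G_{(F,P)}+G_{(F,P)}[S]$ of $\tau_E^k$-structures.

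Finally, the inclusion sending $V(G)$ to the first copy inside $V(H)$ is an embedding $G\leq H$, since no between-copy edges of $G\star_{(F,P)}G$ lie inside the first copy; together with $G,H\in\C$ and $G\models\phi$, extension preservation of $\phi$ over $\C$ then yields $H\models\phi$, as required. The only step that needs any care is the bookkeeping in the edge equation $H\triangle_F P''=(G\triangle_F P)+(G\triangle_F P)[S]$, where one has to separate the within-copy and between-copy contributions of $F$ and use the involutivity of the flip; the rest is a formal chase between graphs and their $(F,P)$-expansions.
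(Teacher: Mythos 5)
Your proposal is correct and follows essentially the same route as the paper: both take the induced subgraph of $G\star_{(F,P)}G$ on the first copy of $G$ together with the copy of $S$, use hereditariness and extension preservation of $\phi$ to conclude it models $\phi$, and then observe that its $(F,\cdot)$-expansion under the inherited partition is isomorphic to $G_{(F,P)}+G_{(F,P)}[S]$, so \Cref{def:phik} transfers the conclusion to $\phi^k$. The cancellation of the between-copy flip edges that you check explicitly is exactly the content of the paper's claim that this expansion is isomorphic to the disjoint union.
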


\begin{proof}
    Fix $\C, \phi, G, P, F$ as in the statement above, and let $S\subseteq V(G)$. Write $G^*$ for the subgraph of $G\star_{(F,P)}G$ induced on the vertex set of $G + G[S]$;  it follows that $G^* \in \C$ by hereditariness. As $G_{(F,P)} \models \phi^k$ we obtain that $G \models \phi$, and since $G^*$ contains an induced copy of $G$ and $\phi$ is preserved by extensions over $\C$ it follows that $G^* \models \phi$. Let $P^*$ be the natural $k$-partition of $G^*$ inherited from $G$, i.e. for each $i \in [k]$ the $i$-th part $P^*_i$ of $P^*$ contains the union of the $i$-th parts of $G$ and $G[S]$. It follows from the definitions that the structure $G^*_{(F,P^*)}$ is isomorphic to $G_{(F,P)} + G_{(F,P)}[S]$. Finally, since $G^* \models \phi$ we obtain that $G^*_{(F,P^*)}\models \phi^k$ and so $G_{(F,P)} + G_{(F,P)}[S] \models \phi^k$ as claimed. 
\end{proof}

We shall also make use of the following observation, which simply says that the induced substructures of $G_{(F,P)}$ are the same as expansions of flips of induced substructures of $G$. 

\begin{observation}\label{obs:indsub}
    Let $G$ be a graph, $P$ a $k$-partition of $G$, and $F$ a $k$-flip. Then for every $S \subseteq V(G)$ the structure $G_{(F,P)}[S]$ is equal to $G[S]_{(F,P_S)}$, where $P_S$ is the $k$-partition of $G[S]$ obtained by restricting each part of $P$ on $S$. 
\end{observation}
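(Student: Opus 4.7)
The plan is to verify equality of the two $\tau_E^k$-structures $G_{(F,P)}[S]$ and $G[S]_{(F,P_S)}$ by direct definitional unfolding on three ingredients: the domain, the interpretations of the unary predicates $P_1, \ldots, P_k$, and the interpretation of the binary relation $E$. There is no genuine obstacle to overcome here; the content of the observation is simply that flipping commutes with the induced substructure operation, which in turn reflects the fact that adjacency in $G \triangle_F P$ depends on a pair $(u,v)$ only through the membership of $u$ and $v$ in the parts of $P$, a piece of data that is preserved when both the graph and the partition are simultaneously restricted to the same subset $S$.

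Concretely, for the domain there is nothing to check, as both sides are built on $S$ by construction. For the unary part, the interpretation of each $P_i$ in $G_{(F,P)}[S]$ is the restriction $P_i \cap S$, which matches the $i$-th part of $P_S$ by the definition of $P_S$. For the edge relation, the plan is to fix an arbitrary pair of distinct $u,v \in S$ and unfold both sides: in $G_{(F,P)}[S]$ the pair is adjacent exactly when it is adjacent in $G \triangle_F P$, while in $G[S]_{(F,P_S)}$ it is adjacent exactly when it is adjacent in $G[S] \triangle_F P_S$. Using that $E^{G[S]}(u,v) = E^G(u,v)$ for $u,v \in S$ and that $u \in (P_S)_i$ iff $u \in P_i$ for $u \in S$, both Boolean combinations collapse to $E^G(u,v)$ XORed with the indicator that $(i,j) \in F$ for the unique indices with $u \in P_i$ and $v \in P_j$, and so coincide. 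Together these three agreements yield the claimed equality as $\tau_E^k$-structures.
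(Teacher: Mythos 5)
Your proposal is correct: it is the straightforward unfolding of the definitions (domain, unary predicates, and the flipped edge relation all commute with restriction to $S$), which is exactly the justification the paper has in mind — the paper states this as an observation without proof precisely because the argument is this definitional check.
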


Before proceeding with \Cref{thm:main} we introduce some relevant definitions. Fix a relational signature $\tau$ and $q,d \in \N$, and let $A$ be a $\tau$-structure. By the \emph{$(q,d)$-type of some $a \in A$} we shall mean the set containing all the \MSO\ formulas $\theta(x)$ of quantifier rank\footnote{Here both first-order and second-order quantifiers contribute to the quantifier rank.} at most $q$, up to logical equivalence, such that $N_d^{A}(a)\models \theta(a)$. When we speak of a $(q,d)$-type $t$ over $\tau$, without reference to a particular element in a structure, we shall mean a $(q,d)$-type of some element in some $\tau$-structure. We say that an element $a \in A$ \emph{realises a $(q,d)$-type $t$} 
whenever $N_d^{A}(a) \models \theta(a)$ for all $\theta(x) \in t$. 
    Evidently, the number of $(q,d)$-types is bounded by some $p \in \N$ depending only on $\tau$ and $q$. Given a $\tau$-structure $A$, a set $C \subseteq A$, and a $(q,d)$-type $t$, we say that $t$ is \emph{covered by $C$ in $A$} if all $a \in A$ realising $t$ satisfy $N_d^{A}(a) \subseteq C$. For $n \in \N$ we also say that $t$ is \emph{$n$-free over $C$ in $A$} if there is a $2d$-independent set $S \subseteq A$ of size $n$ such that each $a \in S$ realises $t$ and $N_d^{A}(a)\cap C = \emptyset$.

\begin{lemma}\label{lem:covered}
    Fix a relational signature $\tau$ and $q,d \in \N$. Let $p$ be the number of $(q,d)$-types over $\tau$. Then for every $\tau$-structure $A$ and $n \in \N$, there exists a radius $e \leq 2dp$ and a set $D \subseteq A$ of at most $(n-1)p$ points such that each $(q,d)$-type is either covered by $N_e^A(D)$ or is $n$-free over over $N_e^A(D)$. 
\end{lemma}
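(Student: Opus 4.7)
The plan is a greedy iterative construction that processes one $(q,d)$-type at a time, adding few vertices to $D$ and growing $e$ by $2d$ per step. I would start with $D_0 := \emptyset$ and $e_0 := 0$, and at each stage $i$ check whether every $(q,d)$-type over $\tau$ is either covered by $N_{e_i}^A(D_i)$ or $n$-free over $N_{e_i}^A(D_i)$. If so, the process halts and outputs $(D, e) := (D_i, e_i)$. Otherwise, pick any type $t$ violating this dichotomy and let $S_i$ be a maximal $2d$-independent subset of $\{a \in A : a \text{ realises } t,\ N_d^A(a) \cap N_{e_i}^A(D_i) = \emptyset\}$. Since $t$ is not $n$-free, $|S_i| \leq n-1$; I would then set $D_{i+1} := D_i \cup S_i$ and $e_{i+1} := e_i + 2d$ and continue.

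The main claim to verify is that, after this update, the processed type $t$ becomes covered by $N_{e_{i+1}}^A(D_{i+1})$, proven by a case split on an arbitrary realiser $a$ of $t$. If $N_d^A(a) \cap N_{e_i}^A(D_i) \neq \emptyset$, then $a \in N_{d+e_i}^A(D_i)$ and hence $N_d^A(a) \subseteq N_{2d+e_i}^A(D_i) = N_{e_{i+1}}^A(D_i) \subseteq N_{e_{i+1}}^A(D_{i+1})$. Otherwise, $a$ was a candidate for inclusion in $S_i$, so by maximality it lies within distance $2d$ of some element of $S_i$, whence $N_d^A(a) \subseteq N_{3d}^A(S_i) \subseteq N_{e_{i+1}}^A(D_{i+1})$. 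Combined with the trivial monotonicity that any type covered by $N_{e_i}^A(D_i)$ is also covered by $N_{e_{i+1}}^A(D_{i+1})$ (since both $D_i$ and $e_i$ are non-decreasing), this forces the set of covered types to strictly enlarge at each iteration. Since there are only $p$ types, the process halts in at most $p$ stages, giving $|D| \leq p(n-1)$ and $e \leq 2dp$ as required.

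The main obstacle I anticipate is bookkeeping the radius: the Case 2 estimate $N_d^A(a) \subseteq N_{3d}^A(D_{i+1})$ requires $e_{i+1} \geq 3d$, whereas the recurrence gives $e_1 = 2d$ at the very first iteration. This is readily resolved by a small adjustment at the base---for instance initialising $e_0 := d$, which is harmless since the classification at stage $0$ depends only on $D_0 = \emptyset$, or absorbing a single constant term into the final bound. Beyond this minor point, the remaining verifications (that being covered is preserved under enlarging $D$ and $e$, and that the set of covered types grows strictly at each stage) are routine bookkeeping.
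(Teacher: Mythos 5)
Your proof is correct and follows essentially the same greedy construction as the paper's: iteratively pick a type that is neither covered nor $n$-free, add a maximal $2d$-separated set of at most $n-1$ of its free realisers to $D$, increase the radius by a fixed amount, and observe that the processed type becomes (and stays) covered, so the process halts within $p$ rounds. The radius bookkeeping issue you flag (the second case really needs $e_{i+1}\geq e_i+3d$ rather than $e_i+2d$) is equally present in the paper's own proof and is harmless, since the bound $e\leq 2dp$ is only used up to an additive constant downstream.
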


\begin{proof}
    Fix an enumeration $t_1,\dots,t_p$ of all $(q,d)$-types over $\tau$. We shall define $D$ and $e$ inductively starting at $D_0=\emptyset$ and $e_0=0$. Assuming $D_i$ and $e_i$ have been defined, we let $C = N_{e_i}^{G^*}(D_i)$. If all types are covered by $C$ or are $n$-free over $C$ then we are done; otherwise, we let $j \in [p]$ be minimal such that $t_j$ is neither covered by $C$ nor $n$-free over $C$. We then define a set $E \subseteq A$ inductively, starting with $E_0:=\emptyset$ and at step $\ell+1$ adding to $E_\ell$ a realisation $a \in A\setminus N_{2d}^A(C \cup E_\ell)$ of $t_j$ if there exists one; this iteration must stop within $n-1$ steps, as otherwise $t_j$ would be $n$-free over $C$. In particular, $|E|\leq n-1$ and $t_j$ is covered by $N_{e_i+2d}^A(D_i \cup E)$. We subsequently let $D_{i+1}=D_i \cup E$ and $e_{i+1}=e_i + 2d$. It follows that the construction must stop within at most $p$ steps, since at each step we cover a previously uncovered type, which in addition, remains covered for the rest of the construction. Consequently, $|D|\leq (n-1)p$ and $e \leq 2dp$ as claimed. 
\end{proof}

\begin{restatable}{theorem}{ThmMain}\label{thm:main}
    Fix a hereditary class of graphs $\C$. Suppose that there is some $k \in \N$ such that for all $r \in \N$ there is a function $f_r:\N \to \N$ satisfying that for every $m \in \N$ and every $G \in \C$ of size at least $f(m)$ there is a $k$-partition $P$ of $V(G)$, some $k$-flip $F $, and $A \subseteq V(G)$ such that 
    \begin{enumerate}
        \item $|A|\geq m$;
        \item $A$ is $r$-independent in $G \triangle_F P$;
        \item\label{ass:3} $G\star_{(F,P)}G \in \C$.
    \end{enumerate}
    Then extension preservation holds over $\C$.
\end{restatable}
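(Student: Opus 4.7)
The plan is to adapt the Ajtai--Gurevich style argument of \cite[Theorem~4.3]{extensionspreservation} to the flipped setting, using the translation to the expanded signature $\tau_E^k$ introduced in \Cref{def:phik}. By \Cref{lem:minimalmodels}, since $\C$ is hereditary, extension preservation over $\C$ reduces to showing that every sentence $\phi$ preserved by extensions over $\C$ admits only finitely many minimal induced models. Fix such a $\phi$ of quantifier rank $q$, and set $d := 7^q$. Suppose towards a contradiction that there is a minimal induced model $G \in \C$ whose size vastly exceeds parameters to be chosen in terms of $\phi$ and $k$. Applying the hypothesis of the theorem with suitable $r, m$, we obtain a $k$-partition $P$ of $G$, a $k$-flip $F$, and an $r$-independent set $A \subseteq V(G)$ in $G \triangle_F P$ with $|A| \geq m$ and $G \star_{(F,P)} G \in \C$.

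We then pass to the expansion $G^* := G_{(F,P)}$, which satisfies $G^* \models \phi^k$ by \Cref{def:phik}, and whose Gaifman graph coincides with that of $G \triangle_F P$. By Gaifman's locality theorem, $\phi^k$ is equivalent to a Boolean combination $\beta(\psi_1, \dots, \psi_t)$ of basic local $\tau_E^k$-sentences of locality radius $d$ and width bounded by some fixed $n_0$. Let $p$ be the number of $(q,d)$-types over $\tau_E^k$. Applying \Cref{lem:covered} to $G^*$ with parameter $n := n_0 + 1$ yields a bottleneck set $D \subseteq V(G)$ of size at most $(n-1)p$ and a radius $e \leq 2dp$ such that each $(q,d)$-type is either covered by $C := N_e^{G^*}(D)$ or $n$-free over $C$. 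Choosing $r \gg 2(d+e)$ and $m \gg |D| + np$, a pigeonhole on $(q,d)$-types in $A$ produces elements $a_1, \dots, a_n \in A$ realising a common $(q,d)$-type, each at $G^*$-distance $>d$ from $C$ and pairwise at $G^*$-distance $>2d$.

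The surgery step is to set $S := V(G) \setminus N_d^{G^*}(a_1)$ and consider the proper induced subgraph $G' := G[S] \in \C$, which is in $\C$ by hereditariness. By \Cref{obs:indsub}, the flipped expansion $G'_{(F,P_S)}$ coincides with $G^*[S]$, so it suffices to establish $G^*[S] \models \phi^k$, which by \Cref{def:phik} forces $G' \models \phi$, contradicting minimality. For this, one verifies that each basic local $\psi_i$ has the same truth value in $G^*[S]$ as in $G^*$. The direction $G^*[S] \models \psi_i \Rightarrow G^* \models \psi_i$ is immediate, since $\psi_i$ is existential and $G^*[S]$ is an induced substructure of $G^*$. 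For the converse, one distinguishes two cases according to the underlying $(q,d)$-type $\rho_i$: if $\rho_i$ is covered by $C$ then all realisers survive in $S$, as $C \cap N_d^{G^*}(a_1) = \emptyset$ by the choice of $a_1$; if $\rho_i$ is $n$-free over $C$, then an appropriate subtuple of $\{a_2, \dots, a_n\}$ supplies the required pairwise $2d$-far witnesses in $G^*[S]$, their $d$-neighbourhoods being untouched by the deletion. The flip-sum closure assumption \ref{ass:3}, via \Cref{lem:disjoint}, serves to validate $\phi^k$ on disjoint-union expansions of the form $G^* + G^*[S']$, which is the dense analogue of the disjoint-union step in the Ajtai--Gurevich strategy and what allows us to justify the local replacement without spurious interactions across the flip.

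The main obstacle will be the rigorous bookkeeping for this converse direction, and specifically ensuring that restricting the partition $P$ to $S$ and applying the flip $F$ on the restriction does not perturb the $(q,d)$-types realised by the surviving witnesses $a_2, \dots, a_n$; this will be controlled by \Cref{obs:indsub} combined with the fact that $N_d^{G^*}(a_1)$ is far from both $C$ and from the $d$-neighbourhoods of the remaining $a_i$. The closure assumption \ref{ass:3} plays precisely the role analogous to ``amalgamation over bottlenecks'' in the sparse setting of \cite[Theorem~4.2]{dawar2024preservation}: it guarantees that the bridging structure invoked by \Cref{lem:disjoint} is in $\C$, keeping the argument internal to $\C$ throughout. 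Once this is in place, the pigeonhole on $|A|$ together with hereditariness furnishes the desired contradiction.
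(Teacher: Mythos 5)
There is a genuine gap, and it sits exactly at the step you defer as ``bookkeeping''. Your surgery deletes $N_d^{G^*}(a_1)$ and claims every basic local sentence in the Gaifman normal form of $\phi^k$ keeps its truth value in $G^*[S]$. Neither direction of that claim is sound. A basic local sentence is not existential: its local condition $\psi^{N_\rho(x)}(x)$ is an arbitrary first-order formula relativised to a neighbourhood, and the scatteredness conjuncts $\dist(x_i,x_j)>2\rho$ are not preserved upwards. Passing from $G^*[S]$ to $G^*$ shrinks distances and enlarges neighbourhoods, so ``$G^*[S]\models\psi_i\Rightarrow G^*\models\psi_i$'' is not immediate and is false in general; conversely, deleting $N_d^{G^*}(a_1)$ increases distances (making the scatteredness conditions easier) and shrinks the $\rho$-neighbourhoods of vertices near the boundary of the deleted ball, so a \emph{negated} basic local sentence that was false in $G^*$ can become true in $G^*[S]$. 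Choosing $a_1$ of a frequent $(q,d)$-type controls the positive conjuncts (witnesses inside the deleted ball can be re-found elsewhere) but gives no handle whatsoever on the negative conjuncts, and it is precisely the negative conjuncts that distinguish extension preservation from homomorphism preservation. A symptom of the problem is that your argument, as written, never actually invokes hypothesis (3) or \Cref{lem:disjoint}: the flip-sum is mentioned as ``validating'' the surgery but plays no role in the deduction, so if your outline worked it would prove the theorem without the closure assumption, which is not expected to be true (preservation fails on natural classes lacking such closure).

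The paper's proof is structured quite differently to get around exactly this obstacle. Rather than removing one ball and preserving all truth values, it runs an induction over the disjuncts of the Gaifman normal form, building sets $S_0\subseteq S_1\subseteq\cdots$, $C_0\subseteq C_1\subseteq\cdots$ and an index set $I_i$ with the invariant that no \emph{disjoint extension} of $G^*[S_i]$ satisfies $\bigvee_{j\in I_i}\psi_j$. At each stage the flip-sum closure is used, through \Cref{lem:disjoint}, to certify $G^*+G^*[S_i]\models\phi^k$, which forces a new disjunct $\psi_{i'}$ to hold there and bounds the number of iterations by the number $\ell$ of disjuncts. Witnesses of the positive conjuncts of $\psi_{i'}$ are relocated into frequent-type regions away from $C_i$, and the candidate proper induced substructure is the \emph{small} structure $U^*=G^*[N_e^{G^*}(D)\cup N_\rho^{G^*}(W_G)\cup S_i]$ (it has no $r$-independent set of size $m$, hence is proper), not the co-small $G[S]$ of your proposal; if $U^*$ fails $\phi^k$, a violated negative conjunct is used to grow $S_i$ and $C_i$ and the induction continues. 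To repair your proposal you would need to replace the single-deletion step by some mechanism of this kind that handles the negated basic local sentences and genuinely uses closure under flip-sums.
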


The proof of \Cref{thm:main} is an adaptation of the proof of \cite[Theorem 4.3]{extensionspreservation}, which established that extension preservation holds over any class closed under weak substructures and disjoint unions which is \emph{wide}, i.e. for every $r \in \N$ there exists $f_r:\N \to \N$ such that for every $m \in \N$ every structure with at least $f_r(m)$-many elements contains an $r$-independent set of size $m$. Here, we replace wideness by strong flip-flatness by working with the formula $\phi^k$ of \Cref{def:phik}.  Moreover, as previously explained, addability is replaced by closure under the appropriate flip-sums; preservation is then ensured by \Cref{lem:embeds2}. Finally, we may relax the assumption of closure under weak substructures to closure under induced substructures.

\begin{proof}[Proof of \Cref{thm:main}]
    Fix $\C$ as above, and let $\phi$ be a formula preserved by embeddings overs $\C$. We shall obtain a bound on the size of the minimal induced models of $\phi$, by arguing that any large enough model of $\phi$ contains a proper induced substructure which also models $\phi$. We can then conclude that $\phi$ is equivalent to an existential formula over $\C$ using \Cref{lem:minimalmodels}. 

    Letting $k \in \N$ be as the in the statement of \Cref{thm:main}, we consider the formula $\phi^k$ from \Cref{def:phik}. Using Gaifman's locality theorem we rewrite $\phi^k$ into a boolean combination of basic local sentences, i.e. we may assume that there is some $\ell \in \N$ and $\tau_E^k$-sentences $\psi_i$ for $i \in [\ell]$ such that  
    \[ \phi^k = \bigvee_{i \in \ell} \psi_i \text{ and }\psi_i = \bigwedge_{j \in A_i} \chi_{ij} \land \bigwedge_{j \in B_i} \neg \chi_{ij},\]
    where each $\chi_{ij}$ is a basic local sentence. We henceforth fix the following constants:
    \begin{itemize}
        \item $\rho$ is the maximum over all the locality radii of the $\chi_{ij}$; 
        \item $s$ is the sum of all widths of the $\chi_{ij}$;
        \item $\gamma$ is the maximum over all the quantifier ranks of the $\chi_{ij}$;
        \item $q: = \gamma+3\rho+3$;
        \item $d:= 2(\rho+1)(\ell +1)s + 6\rho +2$; 
        \item $p$ is the number of $(q,d)$-types over the signature $\tau_E^k$;
        \item $n:=(\ell +2)s$;
        \item $m:=(n-1)q + s + \ell s + 1$;
        \item $r:=4dp + 2\rho + 1$.
    \end{itemize}
    Our goal is to establish that any minimal induced model of $\phi$ in $\C$ must have size less than $f_r(m)$, where $f$ is as in the statement of \Cref{thm:main}. So, assume that some $G\models \phi$ has size at least $f_r(m)$. It follows by assumption that there is a $k$-partition $P$ and a $k$-flip $F$ such that $G \triangle_F P$ contains an $r$-independent set of size $m$. We henceforth work with the structure $G^*:=G_{(F,P)}$, i.e. the expansion of $G\triangle_F P$ with unary predicates corresponding to the parts of $P$. By definition, we have that $G^*\models \phi^k$. 

    By \Cref{lem:covered} we obtain a radius $e \leq 2dp$ and a set $D \subseteq V(G^*)$ of at most $(n-1)p$ vertices such that each $(q,d)$-type in $G^*$ is either covered $N_e^{G^*}(D)$ or is $n$-free over $N_e^{G^*}(D)$; we henceforth refer to types of the former kind as \emph{rare}, and to types of the latter kind as \emph{frequent}. 

    We proceed to inductively construct increasing sequences of sets $S_0 \subseteq S_1 \subseteq \dots \subseteq V(G^*)$, $C_0 \subseteq C_1 \subseteq \dots \subseteq V(G^*)$, and $I_0 \subseteq I_1 \subseteq \dots \subseteq I$ which satisfy the following conditions for every $i$:
    \begin{enumerate}
        \item\label{it:cond1} $S_i \subseteq N_\rho^{G^*}(C_i)$;
        \item $|C_i| \leq is$;
        \item $|I_i| = i$;
        \item\label{it:ext} no disjoint extension of $G^*[S_i]$ satisfies $\bigvee_{j \in I_i} \psi_j$;
        \item\label{it:frqnt} $N_e^{G^*}(D)$ and $N_d^{G^*}(C_i)$ are disjoint.
    \end{enumerate}
    
    Clearly, this construction must terminate within $\ell$ steps. Indeed, assume for a contradiction that we have constructed $S_\ell,C_\ell,$ and $I_\ell$ satisfying conditions \ref{it:cond1}-\ref{it:frqnt} above. If so, then $I_\ell = I$ while $G^* + G^*[S_\ell]$ is a disjoint extension of $G^*[S_\ell]$ which satisfies $\phi^k=\bigvee_{i \in I} \psi_i$ by \Cref{lem:disjoint}, therefore contradicting condition \ref{it:ext}. At the end of the construction we will obtain some $N < \ell$ and some $S_N \subsetneq V(G^*)$ satisfying $G^*[S_N] \models \phi^k$. Combining \Cref{def:phik} with \Cref{obs:indsub}, this will imply that $G[S_N] \models \phi$, and hence that $G$ cannot be a minimal model of $\phi$ as required. 
    
    Initially, we set $S_0=C_0=I_0 = \emptyset$. Assume that $S_i,C_i,$ and $I_i$ have been defined. Write $H^*:= G^* + G^*[S_i]$ for the disjoint union of $G^*$ with its substructure induced on $S_i$. By our closure assumptions on $\C$ and \Cref{lem:disjoint} we deduce that $H^* \models \phi^k$. In particular, there exists some $i' \in I$ such that $H^* \models \psi_{i'}$, while $i' \notin I_i$ due to property \ref{it:ext}. We let $I_{i+1}=I_i \cup \{i'\}$ and henceforth drop the reference to the index $i'$ as it will remain fixed for the remaining of the argument, e.g. by writing $\psi$ and $\chi_j$ instead of $\psi_{i'}$ and $\chi_{i'j}$ respectively. 
    
    As $H^*$ satisfies $\psi= (\bigwedge_{j \in A} \chi_j \land \bigwedge_{j \in B} \neg \chi_j)$, it satisfies the basic local sentences $\chi_j$ with $j \in A$. 
    For each $j \in A$, we may thus choose a minimal set $W_j \subseteq V(H^*)$ of witnesses for the outermost existential quantifiers of the basic local sentence $\chi_j$, and let $W:=\bigcup_{j \in A} W_j$ be their union. As $s$ is the sum of the widths of all the $\chi$'s it follows that $|W|\leq s$. We partition $W$ into those witnesses that appear in the disjoint copy of $G^*$, and those that appear in the disjoint copy of $G^*[S_i]$, and write $W_G$ and $W_H$ for these respective parts. 
    
    Now, suppose that some $v \in W_G$ satisfies $N_{\rho+1}^{G^*}(C_i) \cap N_{\rho}^{G^*}(v)\neq \emptyset$; we argue that we may replace $v$ with some witness $v' \in V(G^*)$ such that $N_{\rho+1}^{G^*}(C_i) \cap N_{\rho}^{G^*}(v')=\emptyset$. Indeed, we first choose some $u \in C_i$ such that $N_{\rho+1}^{G^*}(u) \cap N_{\rho}^{G^*}(v)\neq \emptyset$. Consequently, we have that $N_{\rho}^{G^*}(v)\subseteq N_{3\rho+1}^{G^*}(u)\subseteq N_d^{G^*}(u)$. Property \ref{it:frqnt} then ensures that the $(q,d)$-type $t$ (in $G^*$) of $u$ is frequent, and so it has $n > (\ell+1)s \geq |W\cup C_i|$ realisations whose $d$-neighbourhoods are pairwise disjoint and disjoint from $N_e^{G^*}(D)$. We may thus pick a realisation $u' \in V(G^*)$ of $t$ such that $N_{\rho+1}^{G^*}(W \cup C_i)\cap N_{3\rho+1}^{G^*}(u')=\emptyset$. Let $\tau$ be the $(\gamma,\rho)$-type of $v$, and consider the formula 
    \[ \theta(x):=\exists y[ \forall z (\dist(y,z)\leq \rho \to \dist(x,z) \leq 3\rho+1) \land \bigwedge_{\eta \in \tau} \eta^{N_r(y)}(y) ]. \]
    Clearly, the quantifier rank of $\theta$ is bounded by $3\rho+3+\gamma\leq q$, while $N_d^{G^*}(u) \models \theta(u)$ with $v$ serving as the existential witness. Consequently $\theta(x)$ is in $t$, and as $u$ and $u'$ have the same $(q,d)$-type, it follows that $N_d^{G^*}(u') \models \theta(u')$. It follows that there is $v' \in V(G^*)$ such that $N_{\rho}^{G^*}(v')\subseteq N_{3\rho+1}^{G^*}(u')\subseteq N_d^{G^*}(u')$, while $v$ and $v'$ have the same $(\gamma,\rho)$-type. In particular, their $\rho$-neighbourhoods satisfy the same \FO-formulas of quantifier rank $\leq \gamma$. Finally, observe that $N_{\rho+1}^{G^*}(W \cup C_i)\cap N_{3\rho+1}^{G^*}(v')=\emptyset$ and so $N_{\rho+1}^{G^*}(C_i)\cap N_{\rho}^{G^*}(v') = \emptyset$; we may thus replace $v$ by $v'$ in $W_G$ as a witness. 

    After replacing all such witnesses in $_G$, we can ensure that 
    \[|\{v \in W_G: N_{\rho+1}^{G^*}(C_i) \cap N_{\rho}^{G^*}(v)\neq \emptyset\}| = 0 \tag{$\star$}\]
    Consider the induced substructure $U^*:=G^*[N_e^{G^*}(D)\cup N_\rho^{G^*}(W_G) \cup S_i]$. We claim that $U^*$ satisfies $\bigwedge_{j \in A} \chi_j$. Indeed, notice that $S_i\subseteq N_{\rho}^{G^*}(C_i)$, while $N_{\rho+1}^{G^*}(C_i)$ is disjoint from $N_e^{G^*}(D)$ by property \ref{it:frqnt} and disjoint from $N_\rho^{G^*}(W_G)$ by $(\star)$. It follows that $U^*$ is the disjoint union of $G^*[N_e^{G^*}(D)\cup N_\rho^{G^*}(W_G)]$ and $G^*[S_i]$; thus all the witnesses from $W$ and their $\rho$-neighbourhoods can be found in $U^*$, implying that $U^*\models \chi_j$ for all $j \in A$ as these are basic local sentences. 

    Now, observe that $U^*$ is a proper induced substructure of $G^*$. This is because 
    \[ |D\cup W_G \cup C_i|\leq (n-1)p+s+\ell s < m; \]
    \[ N_e^{G^*}(D)\cup N_\rho^{G^*}(W_G) \cup S_i \subseteq N_{2dp+\rho}^{G^*}(D \cup W_G \cup C_i) \subseteq N_{\left \lfloor{r/2}\right \rfloor}^{G^*}(D \cup W_G \cup C_i), \]
    and so, unlike $G^*$, $U^*$ does not contain an $r$-independent set of size $m$. Consequently, if $U^* \models \phi^k$ then we set $S_N:=N_e^{G^*}(D)\cup N_\rho^{G^*}(W_G) \cup S_i$ and our construction terminates.
    
    We hereafter assume that $U^*\not\models\phi^k$, and proceed with the definition of $S_{i+1}$ and $C_{i+1}$. Since $U^* \models \bigwedge_{j \in A}\chi_j$ it must be that $U^*\not\models \bigwedge_{j \in B}\neg \chi_j$. We can therefore fix some $j \in B$ such that $U^* \models \chi_j$. 
    Suppose that 
    \[ \chi_j = \exists x_1 ,\dots, \exists x_{s'} [ \bigwedge_{a\neq b }\dist(x_a,x_b)>2\rho' \land \bigwedge_{a} \xi^{N_{\rho'}(x_a)}(x_a)]\]
    for some $\rho'\leq \rho$, $s'\leq s$, and a formula $\xi$ of quantifier rank $\gamma'\leq \gamma$. Fix a set $V=\{w_1,\dots,w_{s'}\}\subseteq U^*$ of witnesses for the outermost existential quantifier of $\chi_j$. 
    Notice that if the $(q,d)$-type in $G^*$ of every $w \in V$ was rare then $N_{\rho'}^{G^*}(V)\subseteq N_e^{G^*}(D) \subseteq V(G^*)$, implying that $G^*\models \chi_j$ and thus $H^* \models \chi_j$ as $H^*$ is a disjoint extension of $G^*$ and $\chi_j$ is a basic local sentence. We can thus fix some $w \in V$ whose $(q,d)$-type in $G^*$, say $t_w$, is frequent. As a result, there is a set $Z\subseteq V(G^*)$ of $n$ realisations of $t_w$ whose $d$-neighbourhoods are pairwise disjoint and disjoint from $N_e^{G^*}(D)$. Now, since $4\rho+ 3 \leq d$, $n=(\ell+2)s$, and $|C_i|\leq \ell s$, there exists a subset $Z'\subseteq Z$ of at least $s$ elements which additionally satisfies $N_{\rho+1}^{G^*}(C_i)\cap N_{\rho}^{G^*}(Z') = \emptyset$. 

    Consider $F:=N_{\rho'}^{U^*}(w)$. Evidently, $U^*[F]=G^*[F]$ and so $G^*[F] \models \xi^{N_{\rho'}(x)}(w)$. For a set variable $X$ consider the formula $\xi^{N_{\rho'}(x)\cap X}(x,X)$ obtained from $\xi$ by simultaneously relativising the quantifiers of $\xi$ to the $r'$-neighbourhoods of $x$ and to the set $X$. Observe that the quantifier rank of $\xi^{N_{\rho'}(x)\cap X}(x,X)$ is at most $\gamma'+\rho' < q$, and moreover $G^* \models \xi^{N_{\rho'}(x)\cap X}(w,F)$. Since $\rho'<d$ it follows that the \MSO\ formula $\exists X \xi^{N_{\rho'}(x)\cap X}(x,X)$ is in $t_w$. As every $\omega \in Z'$ has the same $(q,d)$-type in $G^*$ as $w$, we may find sets $F_\omega \subseteq N_{\rho'}^{G^*}(\omega)$ for every $\omega \in Z'$ such that $G^* \models \xi^{N_{\rho'}(x)\cap X}(\omega,F_\omega)$. In particular, this implies that $G^*[F_\omega] \models \xi^{N_{\rho'}(x)}(\omega)$. We finally let: 
    \[ C_{i+1}=C_i \cup Z'; \quad  S_{i+1}=S_i \cup \bigcup_{\omega \in Z'} F_\omega.\]
    We argue that these satisfy the properties \ref{it:cond1}-\ref{it:frqnt}. First, observe that $|C_{i+1}|=|C_i|+s \leq is + s = (i+1)s$. Moreover, as $F_\omega \subseteq N_{\rho'}^{G^*}(\omega)$, $\omega \in C_{i+1}$, and $\rho' \leq \rho$ we have $S_{i+1} \subseteq N_{\rho}^{G^*}(C_{i+1})$. By the fact that every $\omega \in Z'$ realises a frequent type we also have that $N_e^{G^*}(D)\cap N_d^{G^*}(C_{i+1})=\emptyset$. It remains to argue that no disjoint extension of $G^*[S_{i+1}]$ satisfies $\bigvee_{j \in I_{i+1}} \psi_j$. 

    Towards this, we note that $G^*[S_{i+1}]$ is a disjoint extension of $G^*[S_i]$ by the fact that $S_i \subseteq N_{\rho}^{G^*}(C_i)$ and $N_{\rho+1}^{G^*}(C_i)\cap N_{\rho}^{G^*}(Z') = \emptyset$. Therefore, no disjoint extension of $G^*[S_{i+1}]$ satisfies $\psi_j$ for $j \in I_i$. At the same time, every disjoint extension of $G^*[S_{i+1}]$ contains witnesses for the outermost existential quantifiers of $\chi_{i'j}$, namely the elements $\omega \in Z'$, which are pairwise at distance at least $2d>2\rho'$ and satisfy $G^*[F_\omega]\models \xi^{N_{\rho'}(x)}(\omega)$ and $N_{\rho'}^{G^*[S_{i+1}]}(\omega)=F_\omega$, and thus $G^*[S_{i+1}]\models \xi^{N_{\rho'}(x)}(\omega)$. It follows that every disjoint extension of $G^*[S_{i+1}]$ satisfies $\chi_{i'j}$, and so it cannot satisfy $\psi_{i'}$ as needed. This complete our inductive construction of $S_{i+1},C_{i+1}$, and $I_{i+1}$. 
    \end{proof}

    \begin{example}
    For $d \in \N$, write $\Dcal_d$ be the class of all graphs $G$ such that the maximum degree of $G$ is at most $d$, or the maximum degree of $\overline G$, i.e. the complement graph of $G$, is at most $d$. Let $f_r(m)=(m-1)(d+1)^r+1$ and consider a graph $G \in \Dcal_d$ of size at least $f_r(m)$. If $G$ has maximum degree $d$, then $G$ must contain an $r$-independent set of size $m$. Consequently, letting $P=\{V(G)\}$ and $F=\emptyset$, we see that $G \star_{(F,P)} G$ is simply the disjoint union of two copies of $G$, which still has maximum degree $d$ and is therefore in $\Dcal_d$. On the other hand if $\overline G$ has maximum degree $d$, then for $P=\{V(G)\}$ and $F=\{(1,1)\}$, we see that $\bar G=G \triangle_F P$ has an $r$-independent set of size $m$. Since $G \star_{(F,P)} G$ is the complement of the disjoint union of two copies of $\overline G$ and so $\overline{G \star_{(F,P)} G}$ has maximum degree $d$, it follows that ${G \star_{(F,P)} G}\in \Dcal_d$. Consequently, extension preservation holds over $\Dcal_d$ by \Cref{thm:main}. 
\end{example}

As mentioned above, \Cref{lem:minimalmodels} implies that any well-quasi-ordered class has the extension preservation property. In particular, this applies to classes of bounded \emph{shrubdepth} \cite[Corollary 3.9]{ganian2019shrub}. Still, in the following example we indirectly show that the class of all graphs of \emph{$\SC$-depth} at most $k$ has extension preservation by showing that it satisfies the requirements of \Cref{thm:main}. This is an illustration that, although closure under flip-sums is a technical condition, it can be present in interesting tame dense classes.

\begin{definition}\cite[Definition 3.5]{ganian2019shrub}
    We inductively define the class $\SC(k)$ as:
    \begin{itemize}
        \item $\SC(0)=\{K_1\}$;
        \item If $G_1,\dots,G_n \in \SC(k), H:=G_1 + \dots + G_n$ and $X \subseteq V(H)$, then $\overline H^X:=H \triangle_F P \in \SC(k+1)$ for $P_1:= X, P_2:=V(H)\setminus X$ and $F=\{(1,1)\}$, i.e. $\overline H^X$ is the graph obtained from $H$ by flipping the edges within $X$. 
    \end{itemize}
\end{definition}

\begin{example}
    Fix $k \in \N$ and let $G \in \SC(k)$. Consider an \emph{$\SC$-decomposition tree of $G$}, i.e. a labelled tree $\Tcal$ of height $k+1$ whose leaves are labelled by the vertices of $G$, every non-leaf node is labelled by the graph $\overline{(G_1 + \dots G_n)}^X$ where $G_i$ are the labels of its children, $X$ is a subset of $\bigcup_{i \in [n]} V(G_i)$, and the root $\rho$ is labelled by $G$. Let $f(m)=m^{k+1}$, and suppose that $|G|> f(m)$. Since $\Tcal$ has height $k+1$ and its leaves correspond to the vertices of $G$, there must exist some vertex $t$ of $\Tcal$ with at least $m$ children. Let $t_1:=\rho,t_2,\dots,t_\ell:=t$ be the unique path from the root of $\Tcal$ to $t$, and for each $i \in [\ell]$ let $X_i\subseteq V(G)$ be the set coming from the label of $t_i$. Letting $P$ be the partition of $V(G)$ into $2^\ell$ parts depending on the membership of a vertex within each of $X_1,\dots,X_\ell$ and $F\subseteq [2^\ell]^2$ be the flip that corresponds to complementing each of $X_1,\dots,X_\ell$, it follows that $G \triangle_F P$ contains at least $m$ distinct connected components. 
    Let $\Tcal'$ be the tree obtained from $\Tcal$ by the following operation. We first create a copy of each subtree of $\Tcal$ rooted at a child of $t_\ell$ and connect them to $t_\ell$. The labels are naturally carried from each original subtree to the copy. If the label of $t_\ell$ in $\Tcal$ was $\overline{(G_1+\dots+G_m)}^X$, then its label in $\Tcal'$ is $\overline{(G_1+G'_1+\dots+G_m+G_m')}^{X\cup X'}$ where each $G_i'$ corresponds to the copy of $G_i$, and $X'$ corresponds to the set of copies of the vertices in $X$. From there, we perform the same operation for $i=\ell-1,\dots,1$, this time  copying only the children of $t_i$ that are not $t_{i+1}$. This completes the construction of $\Tcal'$. It is easy to then see than the root of $\Tcal'$ corresponds to the graph $G \star_{(F,P)} G$, thus witnessing that $G \star_{(F,P)} G \in \SC(k)$. It follows that the class $\SC(k)$ satisfies the requirements of \Cref{thm:main}, and thus extension preservation holds over this class. 
\end{example}

\section{Conclusion}

We conclude with some questions and remarks. Firstly, it would be of independent interest to provide a characterisation of strongly flip-flat classes, akin to the characterisation of almost-wide  classes via shallow minors given in \cite[Theorem 3.21]{firstorderproperties}. This could either be a characterisation via excluded induced subgraphs occurring in flips, in analogy to the one of monadic stability provided in \cite{dreier2023first}, or in terms of \emph{shallow vertex minors}, in analogy to the one in \cite{shallowvertexminors}. 

Moreover, it is unclear whether one can produce a formula preserved by extensions with minimal induced models of cliquewidth $3$. The issue with  interweaving definable orders is that one simultaneously requires for two sets to semi-induce a half-graph while (non-)adjacency is used to mark successors; this requires to keep track of at least four colours classes in a clique decomposition. %indicating that some other approach should be taken. 
We therefore leave the question of whether extension preservation holds over graphs of cliquewidth $3$ open. It is also easily seen that the structures $\Hcal_n$ have \emph{twin-width} $2$ (see \cite{bonnet2021twin} for definitions).  The status of extension preservation on the class of all graphs of twin-width $1$ is also unknown. 

The role of orders was crucial in our construction in \Cref{sec:ext}. In the context of undirected graphs, orders are instantiated through half-graphs. It natural to then inquire if, for every fixed $k,\ell \in \N$, the class of all graphs of cliquewidth at most $k$ which omit semi-induced half-graphs of size larger than $\ell$ has the extension preservation property. Every such class is known to be equal to a transduction of a class of bounded treewidth by \cite{rankwidthmeets}, and so by \Cref{prop:strongflip}, it is strongly flip-flat. It would therefore be interesting to provide a direct combinatorial argument witnessing this, so as to be able to verify if such classes satisfy the closure requirements of \Cref{thm:main}.

\bibliographystyle{plain}
\bibliography{bibliography}

\begin{thebibliography}{10}

\bibitem{adler}
Hans Adler and Isolde Adler.
\newblock Interpreting nowhere dense graph classes as a classical notion of model theory.
\newblock {\em European Journal of Combinatorics}, 36:322--330, 2014.

\bibitem{ajtai}
Miklos Ajtai and Yuri Gurevich.
\newblock Datalog vs first-order logic.
\newblock {\em Journal of Computer and System Sciences}, 49(3):562--588, 1994.
\newblock 30th IEEE Conference on Foundations of Computer Science.

\bibitem{extensionspreservation}
Albert Atserias, Anuj Dawar, and Martin Grohe.
\newblock Preservation under extensions on well-behaved finite structures.
\newblock {\em SIAM Journal on Computing}, 38(4):1364--1381, 2008.

\bibitem{atseriaspreservation}
Albert Atserias, Anuj Dawar, and Phokion~G Kolaitis.
\newblock On preservation under homomorphisms and unions of conjunctive queries.
\newblock {\em Journal of the ACM (JACM)}, 53(2):208--237, 2006.

\bibitem{baldwin1985second}
John~T Baldwin and Saharon Shelah.
\newblock Second-order quantifiers and the complexity of theories.
\newblock {\em Notre Dame Journal of Formal Logic}, 26(3):229--303, 1985.

\bibitem{bonnet2021twin}
{\'E}douard Bonnet, Eun~Jung Kim, St{\'e}phan Thomass{\'e}, and R{\'e}mi Watrigant.
\newblock Twin-width $\text{I}$: tractable \text{FO} model checking.
\newblock {\em ACM Journal of the ACM (JACM)}, 69(1):1--46, 2021.

\bibitem{monoNIP}
Samuel Braunfeld, Anuj Dawar, Ioannis Eleftheriadis, and Aris Papadopoulos.
\newblock Monadic \text{NIP} in monotone classes of relational structures.
\newblock In {\em 50th International Colloquium on Automata, Languages, and Programming (ICALP 2023)}. Schloss Dagstuhl-Leibniz-Zentrum f{\"u}r Informatik, 2023.

\bibitem{shallowvertexminors}
Hector Buffi{\`e}re, Eun~Jung Kim, and Patrice~Ossona de~Mendez.
\newblock Shallow vertex minors, stability, and dependence.
\newblock {\em arXiv preprint arXiv:2405.00408}, 2024.

\bibitem{courcelle}
Bruno Courcelle and Stephan Olariu.
\newblock Upper bounds to the clique width of graphs.
\newblock {\em Discrete Applied Mathematics}, 101(1):77--114, 2000.

\bibitem{dabrowski}
Konrad~K. Dabrowski, Matthew Johnson, and Daniël Paulusma.
\newblock {\em Clique-width for hereditary graph classes}, page 1–56.
\newblock London Mathematical Society Lecture Note Series. Cambridge University Press, 2019.

\bibitem{damaschke1990induced}
Peter Damaschke.
\newblock Induced subgraphs and well-quasi-ordering.
\newblock {\em Journal of Graph Theory}, 14(4):427--435, 1990.

\bibitem{dawar2007finite}
Anuj Dawar.
\newblock Finite model theory on tame classes of structures.
\newblock In {\em International Symposium on Mathematical Foundations of Computer Science}, pages 2--12. Springer, 2007.

\bibitem{dawar2010homomorphism}
Anuj Dawar.
\newblock Homomorphism preservation on quasi-wide classes.
\newblock {\em Journal of Computer and System Sciences}, 76(5):324--332, 2010.

\bibitem{dawar2024preservation}
Anuj Dawar and Ioannis Eleftheriadis.
\newblock Preservation theorems on sparse classes revisited.
\newblock {\em arXiv preprint arXiv:2405.10887}, 2024.

\bibitem{dreier2023first}
Jan Dreier, Ioannis Eleftheriadis, Nikolas M{\"a}hlmann, Rose McCarty, Micha{\l} Pilipczuk, and Szymon Toru{\'n}czyk.
\newblock First-order model checking on monadically stable graph classes.
\newblock {\em arXiv preprint arXiv:2311.18740}, 2023.

\bibitem{indiscernibles}
Jan Dreier, Nikolas M\"{a}hlmann, Sebastian Siebertz, and Szymon Toru\'{n}czyk.
\newblock {Indiscernibles and Flatness in Monadically Stable and Monadically NIP Classes}.
\newblock In Kousha Etessami, Uriel Feige, and Gabriele Puppis, editors, {\em 50th International Colloquium on Automata, Languages, and Programming (ICALP 2023)}, volume 261 of {\em Leibniz International Proceedings in Informatics (LIPIcs)}, pages 125:1--125:18, Dagstuhl, Germany, 2023. Schloss Dagstuhl -- Leibniz-Zentrum f{\"u}r Informatik.

\bibitem{flipbreak}
Jan Dreier, Nikolas M{\"a}hlmann, and Szymon Toru{\'n}czyk.
\newblock Flip-breakability: A combinatorial dichotomy for monadically dependent graph classes.
\newblock In {\em Proceedings of the 56th Annual ACM Symposium on Theory of Computing}, pages 1550--1560, 2024.

\bibitem{ebbinghaus}
Heinz-Dieter Ebbinghaus and Jörg Flum.
\newblock {\em Finite Model Theory}.
\newblock Springer, 2nd edition, 1999.

\bibitem{ganian2019shrub}
Robert Ganian, Petr Hlin{\v{e}}n{\`y}, Jaroslav Ne{\v{s}}et{\v{r}}il, Jan Obdr{\v{z}}{\'a}lek, and Patrice~Ossona De~Mendez.
\newblock Shrub-depth: Capturing height of dense graphs.
\newblock {\em Logical Methods in Computer Science}, 15, 2019.

\bibitem{deciding}
Martin Grohe, Stephan Kreutzer, and Sebastian Siebertz.
\newblock Deciding first-order properties of nowhere dense graphs.
\newblock {\em J. ACM}, 64(3), jun 2017.

\bibitem{gurevich1984toward}
Yuri Gurevich.
\newblock Toward logic tailored for computational complexity.
\newblock {\em Computation and Proof Theory}, pages 175--216, 1984.

\bibitem{KnauerKnauer+2019}
Ulrich Knauer and Kolja Knauer.
\newblock {\em Algebraic Graph Theory}.
\newblock De Gruyter, Berlin, Boston, 2019.

\bibitem{firstorderproperties}
Jaroslav Ne{\v{s}}et{\v{r}}il and Patrice~Ossona De~Mendez.
\newblock First order properties on nowhere dense structures.
\newblock {\em The Journal of Symbolic Logic}, 75(3):868--887, 2010.

\bibitem{nevsetvril2012sparsity}
Jaroslav Ne{\v{s}}et{\v{r}}il and Patrice~Ossona De~Mendez.
\newblock {\em Sparsity: graphs, structures, and algorithms}, volume~28.
\newblock Springer Science \& Business Media, 2012.

\bibitem{rankwidthmeets}
Jaroslav Ne{\v{s}}et{\v{r}}il, Patrice Ossona~de Mendez, Micha{\l} Pilipczuk, Roman Rabinovich, and Sebastian Siebertz.
\newblock Rankwidth meets stability.
\newblock In {\em Proceedings of the 2021 ACM-SIAM Symposium on Discrete Algorithms (SODA)}, pages 2014--2033. SIAM, 2021.

\bibitem{rossman2008homomorphism}
Benjamin Rossman.
\newblock Homomorphism preservation theorems.
\newblock {\em Journal of the ACM (JACM)}, 55(3):1--53, 2008.

\bibitem{Tait_1959}
William~W. Tait.
\newblock A counterexample to a conjecture of $\text{S}$cott and $\text{S}$uppes.
\newblock {\em Journal of Symbolic Logic}, 24(1):15–16, 1959.

\bibitem{flipwidth}
Szymon Toru{\'n}czyk.
\newblock Flip-width: Cops and robber on dense graphs.
\newblock In {\em 2023 IEEE 64th Annual Symposium on Foundations of Computer Science (FOCS)}, pages 663--700. IEEE, 2023.

\end{thebibliography}

\end{document}